\documentclass[11pt,a4paper]{article}
\usepackage[utf8]{inputenc}
\usepackage[T1]{fontenc}
\usepackage[english]{babel}
\usepackage[left=2.1cm,right=2.1cm,top=2.2cm,bottom=2.2cm]{geometry}

\usepackage{amsmath,amsthm,amssymb}
\usepackage{mathtools}
\usepackage{newtxmath}
\usepackage[scaled=0.92]{helvet}

\DeclareSymbolFont{operators}{T1}{qhv}{m}{n}

\usepackage{booktabs}
\usepackage{tabularx}
\usepackage{tikz}
\usetikzlibrary{arrows.meta, positioning, calc, decorations.pathreplacing,
  shapes.geometric, shapes.misc}
\usepackage[activate={true,nocompatibility},final,tracking=true,
  kerning=true,spacing=true,factor=1100,stretch=15,shrink=15,
  expansion=true,protrusion=true]{microtype}
 
\usepackage{xcolor}
\usepackage{colortbl}
\usepackage{titlesec}
\usepackage{hyperref}
\usepackage{url}
\providecommand{\cref}[1]{\ref{#1}}
\providecommand{\Cref}{\cref}
\usepackage[font=small,labelfont=bf,skip=4pt]{caption}
\usepackage{enumitem}
\makeatletter
\providecommand{\needspace}[1]{%
  \par
  \ifdim\pagegoal=\maxdimen\else
    \dimen@=\pagegoal \advance\dimen@ by -\pagetotal
    \ifdim\dimen@<#1\relax\newpage\fi
  \fi}
\makeatother
\providecommand{\FloatBarrier}{\par\vspace*{0pt}%
  \penalty -100\relax
  \ifvmode\else\par\fi}
\allowdisplaybreaks[1]
\setlist[itemize]{noitemsep, topsep=2pt, partopsep=0pt,
                  parsep=0pt, leftmargin=1.5em}
\setlist[description]{font=\normalfont\itshape, leftmargin=2em,
                      itemsep=2pt}

\definecolor{cybiontblue}{RGB}{0, 70, 140}
\definecolor{darkgray}{RGB}{60, 60, 60}

\titleformat{\section}
  {\Large\bfseries\color{cybiontblue}}
  {\thesection}{1em}{}
\titleformat{\subsection}
  {\large\bfseries\color{darkgray}}
  {\thesubsection}{1em}{}
\titlespacing*{\section}{0pt}{1.3ex plus 0.4ex minus .3ex}{0.6ex plus .1ex minus .1ex}
\titlespacing*{\subsection}{0pt}{0.9ex plus 0.3ex minus .2ex}{0.4ex plus .1ex minus .1ex}

\hypersetup{
  colorlinks=true,
  linkcolor=cybiontblue,
  citecolor=cybiontblue,
  urlcolor=cybiontblue,
  destlabel=true,
  hypertexnames=false,
  pdftitle={Mathematical Foundations for Peer-to-Peer Lattice Computation},
  pdfauthor={Danil Gorinevski},
  pdfsubject={Measure-theoretic elevation, sparse-participation scaling, functorial admissibility, subcritical percolation bounds for peer-to-peer grid graph computation},
  pdfkeywords={optimal transport, Monge-Kantorovich, Wasserstein, graphon, Lawvere theory, proof-carrying code, percolation, parallel computing, grid graph, distributed reduction},
}

\newtheorem{proposition}{Proposition}

\newtheorem{lemma}{Lemma}
\newtheorem{corollary}{Corollary}

\newtheorem{remark}{Remark}

\numberwithin{equation}{section}

\title{%
  \vspace{-3em}%
  \textbf{Mathematical Foundations for\\
  Peer-to-Peer Lattice Computation}\\[0.3em]
  \large
  Five Conditional Bounds and Algebraic Criteria:
  Monge--Kantorovich, Sparse-Participation Scaling,
  Functorial Admissibility, Percolation, and Small-World
}
\author{%
  Danil Gorinevski\thanks{cybiont GmbH,
    Sch\"ubelbach, Switzerland.
    \texttt{research@cybiont.com}.}%
}
\date{2026}

\begin{document}
\maketitle

\begin{abstract}
We give structured proofs for five mathematical propositions that govern the behaviour of synchronous peer-to-peer computation on a finite grid graph embedded in $\mathbb{Z}^2$. Proposition 1 gives three lower bounds: a transport-work bound $\sum_i a_i \ell_i \geq W_1(\mu,\nu)$ attained by every shortest-path schedule; a completion-depth bound $D_{\min} \geq r_\mu$ in terms of the $\mu$-support radius, attained by non-congesting parallel shortest-path routing; and a compressive-reduction edge bound $|E'| \geq \mathrm{St}_G(\mathrm{supp}(\mu)\cup\{x_\star\})$ in terms of the graph-Steiner cost. A negative result establishes that the sink-trunk route-load functional under i.i.d.\ Bernoulli activation has variance $\Theta(f_{\text{act}}(1-f_{\text{act}})P^2)$ for corner-sink dimension-order routing, refuting any naive $O(f_{\text{act}}P^{3/2})$ concentration claim. Proposition 2 establishes, under the $\alpha$-$\beta$-$\gamma$ collective-communication model and a Mixture-of-Experts sparse-activation model, that the grid-to-cluster latency ratio improves monotonically as $f_{\text{act}}$ shrinks whenever the cluster's fixed message/control overhead dominates the grid's geometric constant. Proposition 3 identifies a sufficient algebraic criterion for schedule-independent reduction semantics: update rules decomposing into a local map and an abelian-monoid merge, expressed as a product-preserving functor from the Lawvere theory of commutative monoids into the hardware-state category. Proposition 4 bounds the conditional expected route length under i.i.d.\ site failure in the subcritical regime $\delta < p_c^{\text{site}}(\mathbb{Z}^2)$ by an additive detour term, using Aizenman--Barsky exponential cluster-size decay. Proposition 5 augments the grid graph with $k$ long-range shortcuts per node, collapsing the typical shortest-path length from $\Theta(\sqrt{P})$ to $O(\log P)$ under a mean-field (Erd\H{o}s--R\'enyi) universality argument --- rigorous for the $1$-D-ring base (Newman--Watts--Strogatz) and conjectural for the $2$-D-grid base.
\end{abstract}

\section{Problem statement}
\label{sec:intro}

Many applications reduce to the same geometric primitive: a
finite set of participants carries distributed state across a
planar graph, partial results must be combined toward a
designated origin, and only local neighbour-to-neighbour
communication is available within one unit of synchronous time.
The primitive appears in parallel computation with stationary
local state, in distributed aggregation over sensor or
message-passing networks, in logistics problems on gridded
warehouse or transport topologies, in wave-propagation and
finite-element simulations whose grid graph is the physical
domain itself, and in a range of other settings where state is
naturally distributed and coordination is naturally local.

Five questions recur at the foundation level of this primitive,
drawn from five distinct mathematical frameworks.  Each question
has been answered at the level of applied intuition for years,
and each has a proof sketch in the literature invoking the
relevant framework by name.  This note collects them into a
single structured derivation with explicit assumptions and a
citation at each external-theorem dependency.

\begin{itemize}
  \item[\bf P1.] Does the algorithmic lower bound on coordination
    depth admit a continuous measure-theoretic lift, so that
    routing on the grid graph can be compared against the
    Monge--Kantorovich optimal-transport cost of the underlying
    state distribution?
  \item[\bf P2.] When only a small fraction $f_{\mathrm{act}}$
    of the participants contribute to a given output, does
    the grid graph's advantage over a dense communication graph
    grow as $f_{\mathrm{act}}$ shrinks?
  \item[\bf P3.] Which workloads have schedule-independent
    reduction semantics, and under what scheduler assumptions
    do they admit a saturating grid-graph execution?  Is that
    schedule-independence criterion a formally-checkable
    property of the workload's update rule?
  \item[\bf P4.] Under random participant failure, how does the
    worst-case coordination latency degrade?  Is there a regime
    where the degradation is sub-catastrophic --- an additive
    detour term rather than a collapse?
  \item[\bf P5.] Does adding a sparse budget of long-range
    shortcuts to the grid graph materially change the
    foundations --- both the achievable diameter and the
    near-critical fault-tolerance behaviour?
\end{itemize}

These five questions sit at the foundation level of the
primitive because each one closes a design or analytical
question that practitioners ask, and that the applied literature
answers only by sketch or by intuition.  P1 asks whether
$\Theta(\sqrt{P})$ depth is the fundamental limit or whether
routing slack remains: scaling experiments alone cannot tell, so
we elevate the algorithmic depth lower bound to
optimal-transport semantics on the underlying state measure and
match it tightly.  P2 asks whether the grid's scaling advantage
holds when activation gets sparse: sparse-expert routing makes
this concrete, and we settle it monotonically under an explicit
latency model.  P3 asks when a workload is safe to fold in any
order: a schedule-independence criterion turns informal
judgement into a machine-checkable algebraic property, and we
identify the sufficient functorial structure.  P4 asks whether
latency degrades gracefully or collapses under random tile
failure: field-defect rates make this the design-margin
question, and we bound the conditional expected detour in the
subcritical regime by an additive term.  P5 asks what a sparse
long-range shortcut budget buys: we record the typical-distance
collapse from $\Theta(\sqrt{P})$ to $O(\log P)$ and the
near-critical closure that the pure-$\mathbb{Z}^2$ case does
not admit.  The questions are framed at a deliberately neutral
level of abstraction --- a finite connected graph, a finite set
of participants, a discrete measure on the graph, a synchronous
cycle model --- so the results instantiate across parallel
computation, distributed simulation, logistics, and sensor
networks by choice of particular graphs, cost functions, and
admissible update rules.

\section{Engineering context}
\label{sec:context}

The five propositions correspond to five challenges that arise
whenever a peer-to-peer grid graph is used as a coordinated
computational graph at scale.  Five cartoons fix the objects.

\subsection{Distributed state and its convergence}
When the state accessed per output is large relative to the
arithmetic per output, the bottleneck is state movement rather
than compute.  A peer-to-peer grid graph with \emph{stationary} local
state inverts the traffic: the payload travels, the state does
not.  Each participant $v \in V$ holds a fixed partition of the
global state; a lightweight payload traverses edges under the
constraint that at most one edge is crossed per synchronous
cycle.  The transport work required to converge to a designated origin
is bounded below by the mean graph distance of the initial state
distribution --- the $W_1$ Wasserstein distance on the Manhattan
metric --- and the completion depth is bounded below by the
support radius.  Figure~\ref{fig:grid-cartoon} fixes the
grid-graph geometry and the reduction target.  Proposition~1
makes both bounds precise and asks whether the grid graph
attains them.

\begin{figure}[!tb]
\centering
\begin{tikzpicture}[
  meshnode/.style={rectangle, draw=cybiontblue!75, fill=cybiontblue!10,
    line width=0.5pt, minimum size=0.75cm, inner sep=0pt,
    font=\scriptsize},
  origin/.style={rectangle, draw=cybiontblue!95!black,
    fill=cybiontblue!35, line width=1pt,
    minimum size=0.75cm, inner sep=0pt,
    font=\scriptsize\bfseries},
  payload/.style={circle, draw=none, fill=orange!80!black,
    minimum size=3.2mm, inner sep=0pt},
  flow/.style={->, >=Stealth, draw=orange!75!black, line width=0.8pt,
    dash pattern=on 2pt off 1.5pt},
  lbl/.style={font={\scriptsize\linespread{1.18}\selectfont}, text=darkgray, align=center, inner sep=1pt}
]
\def\sz{0.80}
\foreach \i in {0,...,4}
  \foreach \j in {0,...,4}
    \node[meshnode] at (\i*\sz, \j*\sz) {};
\node[origin] at (0,0) {};
\node[payload] at (3*\sz, 4*\sz) {};
\node[payload] at (4*\sz, 2*\sz) {};
\node[payload] at (2*\sz, 3*\sz) {};
\node[payload] at (4*\sz, 4*\sz) {};
\draw[flow] (3*\sz, 4*\sz) -- (3*\sz, 3*\sz) -- (3*\sz, 2*\sz)
            -- (2*\sz, 2*\sz) -- (1*\sz, 2*\sz) -- (1*\sz, 1*\sz)
            -- (0, 1*\sz) -- (0, 0);
\draw[flow] (4*\sz, 2*\sz) -- (3*\sz, 2*\sz) -- (2*\sz, 2*\sz);
\node[lbl, anchor=west] at (4.5*\sz, 4*\sz)
  {payload carrying\\ state partition};
\node[lbl, anchor=west] at (4.5*\sz, 2*\sz)
  {one hop per cycle,\\ Manhattan edge};
\node[lbl, anchor=west] at (4.5*\sz, 0)
  {reduction origin $x_\star$\\ (arbitrary node)};
\end{tikzpicture}
\caption{The peer-to-peer grid graph object.  A finite graph $G = (V, E)$
embedded in $\mathbb{Z}^2$, edges between Manhattan-nearest
neighbours, each node $v \in V$ carrying stationary local state.
Payloads (orange) start distributed across nodes and must converge
at the origin $x_\star$ (dark corner).  In one synchronous cycle
each payload traverses at most one edge.  Proposition~1 asks for
the optimal-transport cost of this reduction; Proposition~3
describes which payload-update rules admit a saturating schedule;
Proposition~4 asks what happens when some nodes are dead.}
\label{fig:grid-cartoon}
\end{figure}

\subsection{Compositional reduction}
Once state is distributed across $P$ participants, any reduction
must pick an order in which partial results are combined.  On a
peer-to-peer grid graph the order is \emph{underdetermined}: there are
exponentially many trees that could route the reduction.  If the
per-participant update rule factors into a local map
$g(p, S_i)$ and a global merge $\oplus$ such that
$(A, \oplus, \mathbf{0})$ is an abelian monoid, \emph{every}
parenthesisation returns the same value.  The scheduler is free
to reshape the reduction at run-time, routing may deflect around
obstacles, partial results may merge at any convenient
participant --- none of these change the semantics.  The
abelian-monoid decomposition is a correctness discipline, not a
performance optimisation.  Figure~\ref{fig:fold-cartoon}
diagrams the tree-fold.  Proposition~3 gives a sufficient
algebraic criterion for workloads that admit it and shows the
criterion is in principle machine-checkable at compile time.

\begin{figure}[!b]
\centering
\begin{tikzpicture}[
  leaf/.style={circle, draw=cybiontblue!70, fill=cybiontblue!12,
    inner sep=0.5pt, minimum size=6.5mm, font=\scriptsize},
  ibd/.style={circle, draw=orange!75!black, fill=orange!15,
    inner sep=0.5pt, minimum size=7.5mm, font=\scriptsize\bfseries},
  rt/.style={circle, draw=orange!90!black, fill=orange!35,
    inner sep=0.5pt, minimum size=9mm, font=\small\bfseries},
  ed/.style={draw=darkgray!60, line width=0.55pt},
  lbl/.style={font={\scriptsize\linespread{1.18}\selectfont}, text=darkgray, align=left, inner sep=1pt}
]
\def\sp{0.95}
\foreach \i/\lbl in {0/$a_1$,1/$a_2$,2/$a_3$,3/$a_4$,
                     4/$a_5$,5/$a_6$,6/$a_7$,7/$a_8$} {
  \node[leaf] (l\i) at (\i*\sp, 0) {\lbl};
}
\node[ibd] (m0) at (0.5*\sp, 1.45) {$\oplus$};
\node[ibd] (m1) at (2.5*\sp, 1.45) {$\oplus$};
\node[ibd] (m2) at (4.5*\sp, 1.45) {$\oplus$};
\node[ibd] (m3) at (6.5*\sp, 1.45) {$\oplus$};
\foreach \i in {0,1} { \draw[ed] (l\i) -- (m0); }
\foreach \i in {2,3} { \draw[ed] (l\i) -- (m1); }
\foreach \i in {4,5} { \draw[ed] (l\i) -- (m2); }
\foreach \i in {6,7} { \draw[ed] (l\i) -- (m3); }
\node[ibd] (n0) at (1.5*\sp, 2.9) {$\oplus$};
\node[ibd] (n1) at (5.5*\sp, 2.9) {$\oplus$};
\draw[ed] (m0) -- (n0); \draw[ed] (m1) -- (n0);
\draw[ed] (m2) -- (n1); \draw[ed] (m3) -- (n1);
\node[rt] (rt) at (3.5*\sp, 4.3) {$\bigoplus_{i} a_i$};
\draw[ed] (n0) -- (rt); \draw[ed] (n1) -- (rt);

\node[lbl, anchor=west] at (8.0*\sp, 0.0)
  {local value $a_i = g(p, S_i)$\\ on node $i$};
\node[lbl, anchor=west] at (8.0*\sp, 1.45)
  {merge $\oplus$ \\ (commutative,\\ associative)};
\node[lbl, anchor=west] at (8.0*\sp, 4.3)
  {global output $\bigoplus_i a_i$\\ at origin};
\end{tikzpicture}
\caption{The abelian-monoid fold.  Each node $i$ produces a local
value $a_i = g(p, S_i)$ from its stationary state $S_i$ and the
traversing payload $p$.  A commutative associative merge
operator $\oplus$ combines values level by level; because
$\oplus$ makes $(A, \oplus, 0)$ an abelian monoid, the final
output is independent of the pairing order.  Tree-fold depth is
$\lceil \log_2 P \rceil$ on an $\oplus$-compatible reduction
graph; Proposition~2 and Proposition~3 rest on this structure.}
\label{fig:fold-cartoon}
\end{figure}

\subsection{Sparse per-output participation}
In many workloads only a fraction $f_{\mathrm{act}}$ of the
participants contribute to any particular output; the remaining
participants hold resident state but do not compute on that
output, and the active pattern may differ from output to output.
In the latency model analysed below, the dominant grid-side
coordination term is geometric (scaling with $\sqrt{P}$) and
independent of $f_{\mathrm{act}}$; sparse activation then saves
local compute without changing this geometric term.  The
corresponding dense-communication-graph model has a fixed
per-collective overhead that does not vanish as
$f_{\mathrm{act}} \to 0$.  Figure~\ref{fig:sparse-cartoon}
illustrates the active/resident split.  Proposition~2
formalises the asymmetry under these model assumptions.

\begin{figure}[!t]
\centering
\begin{tikzpicture}[
  resident/.style={rectangle, draw=darkgray!35, fill=darkgray!8,
    line width=0.4pt, minimum size=0.65cm, inner sep=0pt},
  active/.style={rectangle, draw=cybiontblue!95!black,
    fill=cybiontblue!55, line width=0.8pt,
    minimum size=0.65cm, inner sep=0pt},
  origin/.style={rectangle, draw=orange!90!black,
    fill=orange!35, line width=1pt,
    minimum size=0.65cm, inner sep=0pt},
  route/.style={->, >=Stealth, draw=orange!75!black, line width=0.9pt,
    dash pattern=on 2pt off 1.2pt},
  lbl/.style={font={\scriptsize\linespread{1.18}\selectfont}, text=darkgray, align=left, inner sep=1pt}
]
\def\sz{0.72}
\foreach \i in {0,...,5}
  \foreach \j in {0,...,5}
    \node[resident] at (\i*\sz, \j*\sz) {};
\foreach \i/\j in {1/0, 3/1, 0/2, 4/2, 2/3, 5/3, 1/4, 4/4, 2/5} {
  \node[active] at (\i*\sz, \j*\sz) {};
}
\node[origin] at (0,0) {};
\draw[route] (1*\sz, 0) -- (0, 0);
\draw[route] (3*\sz, 1*\sz) -- (1*\sz, 1*\sz) -- (1*\sz, 0);
\draw[route] (4*\sz, 2*\sz) -- (0*\sz, 2*\sz);
\draw[route] (2*\sz, 3*\sz) -- (2*\sz, 2*\sz) -- (0, 2*\sz);
\node[lbl, anchor=west] at (6*\sz + 0.2, 5*\sz)
  {active node (dark blue):\\
   one of $f_{\mathrm{act}}\cdot P$\\
   participating per output};
\node[lbl, anchor=west] at (6*\sz + 0.2, 3*\sz)
  {resident node (grey):\\
   state present, not\\
   computing this payload};
\node[lbl, anchor=west] at (6*\sz + 0.2, 1*\sz)
  {coordination cost\\
   depends on \emph{geometry},\\
   not $f_{\mathrm{act}}$};
\end{tikzpicture}
\caption{Sparse-activation pattern.  Only a fraction
$f_{\mathrm{act}} \cdot P$ of the $P$ nodes (dark blue) contribute
to the per-output fold; the remaining nodes (grey) hold resident
state but do not compute.  On the grid graph, coordination cost scales
with the \emph{geometric} diameter $\sqrt{P}$ of the route graph,
independent of which fraction of nodes happens to be active ---
this is why the sparse-regime advantage in Proposition~2 arises.
On a dense-cluster graph, the analogous coordination cost scales
with the \emph{participant count} $N_{\mathrm{gpu}}$ and does not
vanish as $f_{\mathrm{act}} \to 0$.}
\label{fig:sparse-cartoon}
\end{figure}

\subsection{Resilience under random participant failure}
A graph of $P = 10^4$ participants is a reliability problem even
before it is a performance problem: at any realistic individual
failure rate, some participant is absent during almost every
large reduction.  Most parallel runtimes respond by aborting and
restarting; the grid graph can instead detour a payload around an
isolated failed participant at the cost of exactly $+2$ edges, and
the cumulative detour across a subcritical random-failure field
grows only additively in the number of failed participants.
Figure~\ref{fig:deflection-cartoon} shows the isolated-obstacle
detour geometry.  Proposition~4 bounds the expected degradation
and identifies the regime in which it remains additive.

\begin{figure}[!b]
\centering
\begin{tikzpicture}[
  meshnode/.style={rectangle, draw=cybiontblue!75, fill=cybiontblue!10,
    line width=0.5pt, minimum size=0.75cm, inner sep=0pt},
  dead/.style={rectangle, draw=red!60!black, fill=red!25,
    line width=0.8pt, minimum size=0.75cm, inner sep=0pt,
    font=\scriptsize\bfseries},
  origin/.style={rectangle, draw=orange!90!black,
    fill=orange!35, line width=1pt,
    minimum size=0.75cm, inner sep=0pt,
    font=\scriptsize\bfseries},
  src/.style={circle, draw=none, fill=orange!90!black,
    minimum size=3mm, inner sep=0pt},
  nominal/.style={->, >=Stealth, draw=darkgray!60, line width=0.7pt,
    dash pattern=on 2pt off 1.2pt},
  detour/.style={->, >=Stealth, draw=cybiontblue!95!black,
                line width=1.1pt},
  lbl/.style={font={\scriptsize\linespread{1.18}\selectfont}, text=darkgray, align=left, inner sep=1pt}
]
\def\sz{0.72}
\foreach \i in {0,...,5}
  \foreach \j in {0,...,5}
    \node[meshnode] at (\i*\sz, \j*\sz) {};
\node[dead] at (3*\sz, 3*\sz) {\textsf{X}};
\node[origin] at (0, 3*\sz) {\textsf{dest}};
\node[src] at (5*\sz, 3*\sz) {};
\draw[nominal] (5*\sz, 3*\sz) -- (4*\sz, 3*\sz);
\draw[nominal, line width=0.6pt] (4*\sz, 3*\sz) -- (3.3*\sz, 3*\sz);
\draw[nominal] (3*\sz, 3*\sz) -- (0.2*\sz, 3*\sz);
\draw[detour] (5*\sz, 3*\sz) -- (4*\sz, 3*\sz);
\draw[detour] (4*\sz, 3*\sz) -- (4*\sz, 4*\sz);
\draw[detour] (4*\sz, 4*\sz) -- (2*\sz, 4*\sz);
\draw[detour] (2*\sz, 4*\sz) -- (2*\sz, 3*\sz);
\draw[detour] (2*\sz, 3*\sz) -- (0.15*\sz, 3*\sz);
\node[lbl, anchor=west] at (6.0*\sz, 4.3*\sz)
  {failed node (red X):\\
   XY-route cannot pass};
\node[lbl, anchor=west] at (6.0*\sz, 3.0*\sz)
  {nominal path\\
   (dashed): 5 edges};
\node[lbl, anchor=west] at (6.0*\sz, 1.7*\sz)
  {deflected path\\
   (solid blue):\\
   7 edges $= 5 + 2$};
\end{tikzpicture}
\caption{Deflection routing around a single failed node.  Under
XY-order deterministic routing on a 2-D grid, a packet whose
nominal shortest path crosses a dead node must detour one row
above (or below) the obstacle and rejoin the original corridor.
The additive cost of the detour is exactly $+2$ edges in the
isolated-obstacle case: one row-step and one column-step around
the obstacle.  Proposition~4 extends this observation to a
random-failure setting and controls the expected cumulative
detour across clusters of failed nodes.}
\label{fig:deflection-cartoon}
\end{figure}

\subsection{Long-range shortcuts}
A pure $\sqrt{P} \times \sqrt{P}$ grid graph has diameter
$\Theta(\sqrt{P})$; the coordination lower bound of
Proposition~1 is correspondingly $\Omega(\sqrt{P})$ cycles.
Many applications of peer-to-peer grid graphs augment the
nearest-neighbour graph with a sparse budget of long-range
shortcuts --- in the Watts--Strogatz / Newman--Watts tradition,
a small fraction of edges rewired or added as uniformly-random
partners.  The
typical shortest-path length then collapses to $O(\log P)$, and
the fault-tolerance analysis of Proposition~4 becomes
analytically tractable under the mean-field
(Erd\H{o}s--R\'enyi) universality assumption (rigorous for the
1-D-ring base, conjectural for the 2-D-grid base).  Figure~\ref{fig:smallworld-cartoon} overlays the shortcut
edges on the nearest-neighbour grid.  Proposition~5 formalises
the construction and its consequences.

\begin{figure}[!t]
\centering
\begin{tikzpicture}[
  meshnode/.style={rectangle, draw=cybiontblue!65, fill=cybiontblue!08,
    line width=0.4pt, minimum size=0.55cm, inner sep=0pt},
  origin/.style={rectangle, draw=cybiontblue!95!black,
    fill=cybiontblue!40, line width=0.9pt,
    minimum size=0.55cm, inner sep=0pt},
  meshedge/.style={draw=cybiontblue!40, line width=0.35pt},
  express/.style={->, >=Stealth, draw=orange!80!black, line width=0.85pt,
    bend left=22, opacity=0.85},
  lbl/.style={font={\scriptsize\linespread{1.18}\selectfont}, text=darkgray, align=left, inner sep=1pt}
]
\def\sz{0.60}
\foreach \i in {0,...,6}
  \foreach \j in {0,...,6} {
    \node[meshnode] (n\i\j) at (\i*\sz, \j*\sz) {};
  }
\node[origin] at (0,0) {};
% Grid edges (horizontal)
\foreach \j in {0,...,6}
  \foreach \i in {0,...,5} {
    \pgfmathtruncatemacro{\ii}{\i+1}
    \draw[meshedge] (n\i\j) -- (n\ii\j);
  }
% Grid edges (vertical)
\foreach \i in {0,...,6}
  \foreach \j in {0,...,5} {
    \pgfmathtruncatemacro{\jj}{\j+1}
    \draw[meshedge] (n\i\j) -- (n\i\jj);
  }
% Express links (long-range shortcuts)
\draw[express] (n05) to (n60);
\draw[express] (n13) to (n56);
\draw[express] (n30) to (n04);
\draw[express] (n66) to (n20);
\draw[express] (n22) to (n64);
\draw[express] (n52) to (n15);

\node[lbl, anchor=west] at (7*\sz + 0.25, 5.8*\sz)
  {long-range\\ shortcut edges\\ (orange)};
\node[lbl, anchor=west] at (7*\sz + 0.25, 3.4*\sz)
  {local grid graph edges\\ (blue, Manhattan)};
\node[lbl, anchor=west] at (7*\sz + 0.25, 0.6*\sz)
  {typical distance drops\\ from $\Theta(\sqrt{P})$\\ to $O(\log P)$};
\end{tikzpicture}
\caption{The small-world extension.  A sparse set of long-range
shortcuts (orange arcs) is overlaid on the nearest-neighbour
grid graph (blue grid).  Each node acquires $k \ge 1$ uniformly-random
long-range edges.  The typical shortest-path length collapses from
$\Theta(\sqrt{P})$ to $O(\log P)$, and percolation on the
augmented graph is conjectured to lie in the mean-field
universality class (rigorous for the 1-D-ring base
\cite{watts1998,newman2000smallworld}; conjectural for the
2-D-grid base).  Proposition~5 states the consequences.}
\label{fig:smallworld-cartoon}
\end{figure}

\subsection*{Ground rules for the proofs}

Every external theorem dependency is cited; elementary
derivations and modelling assumptions are stated explicitly.
Where a step initially required machinery I could not identify
in the cited bibliography, the audit process (see
\S\ref{sec:acks}) either supplied a closing derivation, a
further citation, a tightening of the proposition statement,
or a recognition that the step was not a mathematical claim.
I deliberately avoid confabulated theorem numbers: where I am
uncertain of an exact numbered result in a long monograph, I
cite the chapter or section heading.

\FloatBarrier

\section{Preliminaries}
\label{sec:prelim}

\subsection{Grid graph, cost, cycles}

Let $G = (V, E)$ be a finite connected graph embedded in
$\mathbb{Z}^2$ with edges joining nearest neighbours in the
Manhattan sense.  Write $|V| = P$ for the number of nodes and
$D := \operatorname{diam}(G)$ for the graph diameter.  The graph
metric $d_G$ coincides with the restricted Manhattan distance
$d_1(x, y) = |x_1 - y_1| + |x_2 - y_2|$.

A computation proceeds in synchronous \emph{cycles}.  In each cycle,
at most one payload may traverse each edge; each node may execute at
most one local update operation.  Write $t_{\mathrm{cycle}}$ for the
wall-clock duration of one cycle,
$t_{\mathrm{edge}} \in \mathbb{Z}_{\ge 0}$ for the cycle budget of a
single inter-node hop, and $t_{\mathrm{merge}}$ for the cycle budget
of one associative merge at a node.

\subsection{State measures and reduction target}

Assume a distributed computation has $P$ state partitions indexed
$i = 1, \dots, P$, with partition $i$ resident on node $x_i \in V$,
carrying positive mass $m_i > 0$.  Define the initial discrete
probability measure
\begin{equation}
  \mu = \sum_{i=1}^{P} a_i \, \delta_{x_i},
  \qquad
  a_i = \frac{m_i}{\sum_j m_j}.
\end{equation}
A global reduction converges the mass to a designated origin
$x_\star \in V$, giving the target
\begin{equation}
  \nu = \delta_{x_\star}.
\end{equation}
The admissible transport plans are
\(
  \Pi(\mu,\nu) = \{\gamma \in \mathcal{P}(V \times V):
    (\pi_1)_\# \gamma = \mu,\; (\pi_2)_\# \gamma = \nu\},
\)
and the 1-Wasserstein distance on $(V, d_1)$ is
\begin{equation}
  W_1(\mu,\nu) \;=\;
    \min_{\gamma \in \Pi(\mu,\nu)}
    \int_{V \times V} d_1(x,y) \, d\gamma(x,y).
\end{equation}
Existence of a minimising $\gamma$ is a finite-dimensional linear
programme and is an elementary consequence of the general existence
theorem for Kantorovich minimisers on Polish spaces
\cite{villani2009}.

\subsection{Admissibility: the abelian-monoid decomposition}

Let $S$ be a global state space and $p$ a traversing payload.  An
\emph{update rule}
\(
  U:\ \mathrm{State} \times \mathrm{Payload} \to \mathrm{State}
\)
is said to admit an \emph{abelian-monoid decomposition} if there
exist a localised map $g(p, S_i)$ and a binary operation $\oplus$
such that
\begin{equation}
  U(S, p) = \bigoplus_{i \in V} g(p, S_i),
\end{equation}
and $(A, \oplus, 0)$ is an abelian monoid on the carrier set
$A = \{ g(p, S_i) : i \in V, p \in \mathrm{Payload} \}$.

\subsection{Failure model}

Tile failures are modelled as independent site deletions on the
lattice $\mathbb{Z}^2$: each node fails with probability
$\delta \in [0, 1]$, independently of other nodes.  A \emph{failure
cluster} is a maximal connected subgraph of the induced
site-percolation configuration.  Following the convention of
\cite{grimmett1999}, write $p_c^{\mathrm{site}}(\mathbb{Z}^2)$
for the critical site-percolation threshold on the square
lattice.  Unlike the bond result
$p_c^{\mathrm{bond}}(\mathbb{Z}^2) = 1/2$ \cite{kesten1980}, no
closed-form value is known; the numerical estimate is
$p_c^{\mathrm{site}}(\mathbb{Z}^2) \approx 0.593$
\cite{akhunzhanov2022}.  We therefore work with the abstract
subcritical condition $\delta < p_c^{\mathrm{site}}(\mathbb{Z}^2)$
throughout.

\section{Proposition 1 --- Monge--Kantorovich lower bounds}
\label{sec:prop1}

\begin{proposition}\label{prop:p1}
Any synchronous schedule realising the reduction
$\mu \mapsto \nu$ on the grid graph $(V, d_1)$ is constrained by
three distinct lower bounds, each bounding a different quantity.

\textbf{(i) Transport-work bound.}  Total distance-weighted
payload movement is at least the $1$-Wasserstein distance,
\begin{equation}
  \sum_{i} a_i \, \ell_i \;\ge\; W_1(\mu, \nu),
\end{equation}
attained with equality by \emph{every} shortest-path schedule
(monotone Manhattan routing makes the inequality an identity for
each source independently; see Step~3 of the proof).

\textbf{(ii) Completion-depth bound.}  The minimum number of
synchronous rounds is at least the $\mu$-support radius,
\begin{equation}
  D_{\min}
    \;\ge\; r_\mu
    \;:=\; \max\bigl\{d_1(x, x_\star) : x \in \operatorname{supp}(\mu)\bigr\},
\end{equation}
attained by the parallel monotone Manhattan shortest-path
schedule (Step~3 of the proof) under the idealised non-congesting
model in which many shortest paths to $x_\star$ can share edges
freely.  Under edge-capacity-one
constraints the attainment is subject to a congestion-driven
scheduling term whose magnitude is recorded after the proof
(Lemma~\ref{lem:p1-variance} and the bottleneck-variance
paragraph).  The wall-clock completion time satisfies
\begin{equation}
  T_{\mathrm{grid}}
    \;\ge\; r_\mu \cdot t_{\mathrm{edge}} \cdot t_{\mathrm{cycle}}
\end{equation}
(hops $\times$ cycles-per-hop $\times$ seconds-per-cycle), and in
the idealised non-congesting regime a parallel shortest-path
schedule achieves $T_{\mathrm{grid}} \le (r_\mu t_{\mathrm{edge}}
+ K_{\mathrm{arch}}) t_{\mathrm{cycle}}$ for a fixed
architecture-dependent cycle overhead $K_{\mathrm{arch}}$
(Step~5 of the proof).

\textbf{(iii) Compressive-reduction edge bound.}  If the
computation is a fixed-size compressive reduction over a monoid
and each active source must influence the final sink value, the
set of edges carrying non-identity information during the
schedule must contain a connected subgraph spanning
$\operatorname{supp}(\mu) \cup \{x_\star\}$.  The number of
distinct used edges is therefore at least the graph-Steiner cost
\begin{equation}
  \operatorname{St}_G\bigl(\operatorname{supp}(\mu) \cup \{x_\star\}\bigr),
\end{equation}
where $\operatorname{St}_G(A)$ denotes the minimum number of edges in a
connected subgraph of $G$ containing $A$.

Bounds (i), (ii), (iii) are for three different quantities ---
transport work, causal depth, edges used --- and need not
saturate on the same schedule.  The weaker inequality
$D_{\min} \ge W_1(\mu, \nu)$ follows from $r_\mu \ge W_1(\mu,
\nu)$ and is tight only when the support of $\mu$ lies on a
single graph-distance sphere around $x_\star$.
\end{proposition}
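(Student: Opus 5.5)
The plan is to reduce everything to a single observation: because $\nu=\delta_{x_\star}$ is a Dirac mass, $\Pi(\mu,\nu)$ contains exactly one plan, $\gamma=\mu\otimes\delta_{x_\star}$. Indeed, the second marginal forces $\gamma$ to be supported on $V\times\{x_\star\}$, and then the first marginal fixes its weights. Hence
\[
W_1(\mu,\nu)=\sum_i a_i\, d_1(x_i,x_\star),
\]
and no Kantorovich duality is needed. Before anything else I will fix the notation: $\ell_i$ is the number of edges traversed by the payload originating at $x_i$, and a schedule is a family of walks from each $x_i$ to $x_\star$, one hop per cycle.

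For \textbf{(i)}, any walk from $x_i$ to $x_\star$ has length $\ell_i\ge d_G(x_i,x_\star)$. On $G$ I take $d_G=d_1$ as stated in the preliminaries; for a non-convex $G\subset\mathbb{Z}^2$ only $d_G\ge d_1$ holds, and that inequality is what the argument actually uses. Summing with weights $a_i$ gives (i). Equality holds exactly when every $\ell_i=d_G(x_i,x_\star)$, which is the definition of a shortest-path schedule. A monotone Manhattan path is such a path whenever it stays inside $G$. This is the content of ``Step~3''.

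For \textbf{(ii)}, the lower bound is a causality argument. After $t$ rounds, information from $x_i$ can only have reached nodes within graph distance $t$ of $x_i$. The sink must receive a contribution from the farthest support point, so $D_{\min}\ge\max_{x\in\mathrm{supp}\,\mu}d_G(x,x_\star)=r_\mu$. For attainment in the idealised non-congesting model, I run all payloads simultaneously along shortest paths. Payload $i$ arrives at round $d(x_i,x_\star)\le r_\mu$, so the depth equals $r_\mu$. Multiplying hops by $t_{\mathrm{edge}}$ cycles per hop and by $t_{\mathrm{cycle}}$ seconds per cycle converts this into the wall-clock bounds. I then add a constant $K_{\mathrm{arch}}$ covering injection, the final merge ($t_{\mathrm{merge}}$ per node, absorbed since merges along the way pipeline in the non-congesting model), and readout. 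That gives the upper bound as Step~5.

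For \textbf{(iii)}, I take $H$ to be the subgraph of edges that ever carry non-identity information. Each active source must influence the sink value, so there is a time-respecting chain of such transmissions from $x_i$ to $x_\star$. The edges of this chain lie in $H$ and connect $x_i$ to $x_\star$. Hence the component of $H$ containing $x_\star$ contains all of $\mathrm{supp}(\mu)\cup\{x_\star\}$, and $|E(H)|\ge\mathrm{St}_G$ by definition.

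The closing remark follows in one line: $W_1=\sum_i a_i d(x_i,x_\star)\le\max_i d(x_i,x_\star)=r_\mu$, because the $a_i$ are positive weights summing to one. Equality therefore holds iff all support distances coincide.

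The main obstacle is (iii), specifically making ``influence'' precise. An identity-valued message carries no information, so the chain must be defined by dependence of the sink output on the source input. I expect to formalise it as follows. If some $x_i$ had no chain of non-identity transmissions to $x_\star$, then replacing its local value by $0$ would leave every transmitted value unchanged. By induction on rounds, the sink value would then be unchanged, contradicting the hypothesis.
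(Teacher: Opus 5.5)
Your proposal follows the paper's proof essentially step for step. The unique coupling for a Dirac target is Step~1. The bound $\ell_i\ge d(x_i,x_\star)$ and the light-cone argument for $r_\mu$ are Step~2. Simultaneous shortest-path dispatch is Step~3, the wall-clock conversion with $K_{\mathrm{arch}}$ is Step~5, and connectivity of the non-identity edge set is Step~5\textsuperscript{$\prime$}. Your remark that a non-convex $G$ only gives $d_G\ge d_1$ is a correct sharpening of the paper's preliminaries. The lower bounds survive, while equality in (i) and attainment in (ii) rely on $d_G=d_1$ at the relevant pairs, i.e.\ on monotone paths lying inside $G$.

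The one place you go beyond the paper, the formalisation of \emph{influence} in (iii), does not work as written.

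First, replacing $a_i$ by $0$ in general changes $x_i$'s own outgoing messages and everything downstream of them. So ``every transmitted value unchanged'' is false. What an induction on rounds actually gives is that messages sent from nodes outside the time-respecting forward cone of $x_i$ are unchanged.

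Second, and more seriously, the premise that an identity-valued message carries no information fails for monoids with cancellation. In $(\mathbb{Z},+,0)$ take $x_\star=(0,0)$, $y=(1,0)$, $x_i=(2,0)$, $x_j=(1,1)$, with $a_i=1$ and $a_j=-1$. Both sources forward to $y$, which forwards $1+(-1)=0$ to $x_\star$. The edge $\{y,x_\star\}$ carries the identity, so $x_i$ is not joined to $x_\star$ by non-identity edges. Yet replacing $a_i$ by $0$ makes $y$ send $-1$ and changes the sink value. The forward cone must therefore be built from \emph{all} transmissions.

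The argument closes if you take $E'$ to be the set of edges on which an input-independent schedule transmits anything, and then run the forward-cone induction. Input-independence ensures that neither identity values nor silence can carry information. This matches the statement's own phrase ``distinct used edges''. Alternatively, keep ``non-identity'' but restrict to conical monoids ($a\oplus b=0\Rightarrow a=b=0$) whose messages are partial $\oplus$-aggregates formed by an input-independent pattern. There, every aggregate containing the summand $a_i\neq 0$ is itself non-identity, and the provenance chain of $a_i$ supplies the path.

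The paper's Step~5\textsuperscript{$\prime$} glosses over exactly this point by simply asserting that no information from $x$ could reach $x_\star$. Your instinct that (iii) is the delicate step is right; it needs one of these readings of ``non-identity information'' fixed in advance.
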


\begin{figure}[!b]
\centering
\begin{tikzpicture}[
  meshnode/.style={rectangle, draw=cybiontblue!75, fill=cybiontblue!10,
    line width=0.35pt, minimum size=0.48cm, inner sep=0pt},
  originsm/.style={rectangle, draw=cybiontblue!95!black,
    fill=cybiontblue!35, line width=0.7pt,
    minimum size=0.48cm, inner sep=0pt},
  srcdot/.style={circle, draw=none, fill=orange!80!black,
    minimum size=2.3mm, inner sep=0pt},
  transportflow/.style={->, >=Stealth, draw=orange!75!black, line width=0.55pt},
  depthflow/.style={->, >=Stealth, draw=red!70!black, line width=1.1pt},
  steineredge/.style={draw=green!50!black, line width=1.4pt, line cap=round},
  panellbl/.style={font=\small\bfseries, text=darkgray},
  lbl/.style={font={\scriptsize\linespread{1.18}\selectfont}, text=darkgray, align=center, inner sep=1pt}
]
\def\sz{0.48}
% --- Panel A: W_1 mass transport work ---
\begin{scope}[xshift=0cm]
  \foreach \i in {0,...,3}
    \foreach \j in {0,...,3}
      \node[meshnode] at (\i*\sz, \j*\sz) {};
  \node[originsm] at (0,0) {};
  \node[srcdot] at (3*\sz, 3*\sz) {};
  \node[srcdot] at (2*\sz, 3*\sz) {};
  \node[srcdot] at (3*\sz, 1*\sz) {};
  \draw[transportflow] (3*\sz,3*\sz)--(0,3*\sz)--(0,0);
  \draw[transportflow] (2*\sz,3*\sz)--(0,3*\sz);
  \draw[transportflow] (3*\sz,1*\sz)--(0,1*\sz)--(0,0);
  \node[panellbl, anchor=west] at (3*\sz+0.2, 2.2*\sz) {$W_1$};
  \node[lbl, anchor=west] at (3*\sz+0.3, 1.1*\sz) {mass transport\\ work};
\end{scope}
% --- Panel B: r_mu causal depth ---
\begin{scope}[xshift=4.4cm]
  \foreach \i in {0,...,3}
    \foreach \j in {0,...,3}
      \node[meshnode] at (\i*\sz, \j*\sz) {};
  \node[originsm] at (0,0) {};
  \node[srcdot] at (3*\sz, 3*\sz) {};
  \node[srcdot, opacity=0.30] at (2*\sz, 3*\sz) {};
  \node[srcdot, opacity=0.30] at (3*\sz, 1*\sz) {};
  \draw[depthflow] (3*\sz,3*\sz)--(0,3*\sz)--(0,0);
  \node[panellbl, anchor=west] at (3*\sz+0.2, 2.2*\sz) {$r_\mu$};
  \node[lbl, anchor=west] at (3*\sz+0.3, 1.1*\sz) {causal depth /\\ farthest source};
\end{scope}
% --- Panel C: St_G compressive-reduction edge support ---
\begin{scope}[xshift=8.8cm]
  \foreach \i in {0,...,3}
    \foreach \j in {0,...,3}
      \node[meshnode] at (\i*\sz, \j*\sz) {};
  \node[originsm] at (0,0) {};
  \draw[steineredge] (0,0)--(0,1*\sz)--(0,2*\sz)--(0,3*\sz);
  \draw[steineredge] (0,3*\sz)--(1*\sz,3*\sz)--(2*\sz,3*\sz)--(3*\sz,3*\sz);
  \draw[steineredge] (0,1*\sz)--(1*\sz,1*\sz)--(2*\sz,1*\sz)--(3*\sz,1*\sz);
  \node[srcdot] at (3*\sz, 3*\sz) {};
  \node[srcdot] at (2*\sz, 3*\sz) {};
  \node[srcdot] at (3*\sz, 1*\sz) {};
  \node[panellbl, anchor=west] at (3*\sz+0.2, 2.2*\sz) {$\operatorname{St}_G$};
  \node[lbl, anchor=west] at (3*\sz+0.3, 1.1*\sz) {compressive-reduction\\ edge support};
\end{scope}
\end{tikzpicture}
\caption{Three different quantities on the same grid instance.
Mass transport pays for every unit-distance moved ($W_1$);
causal depth is controlled by the farthest active source
($r_\mu$); a compressive monoid reduction only needs a
connected edge support spanning the sources and the sink
($\operatorname{St}_G$).  The three quantities need not
saturate on the same schedule.}
\label{fig:three-cost-cartoon}
\end{figure}

\begin{proof}
(Geometry: Figure~\ref{fig:grid-cartoon}; three lower-bound
quantities illustrated in Figure~\ref{fig:three-cost-cartoon}.)
\textbf{Step 1 --- Optimal coupling for a Dirac target.}
Because $\nu = \delta_{x_\star}$ places all mass at one point, the
marginal constraints force the coupling to be unique:
\(
  \gamma = \sum_{i=1}^{P} a_i \, \delta_{(x_i, x_\star)}.
\)
Consequently
\begin{equation}
  W_1(\mu, \nu)
    = \sum_{i=1}^{P} a_i \, d_1(x_i, x_\star),
\end{equation}
which is the mean Manhattan distance of the support of $\mu$ to the
origin.  This uses only the definition of $W_1$ and uniqueness of
Kantorovich couplings for a Dirac marginal, both elementary in the
finite-metric-space case \cite[Ch.~6]{villani2009}.

\textbf{Step 2 --- Causal depth: radius and Wasserstein bounds.}
In one synchronous cycle a payload traverses at most one grid
edge, so information initially at $x_i$ cannot influence the sink
$x_\star$ before round $d_1(x_i, x_\star)$.  Two lower bounds on
$D_{\min}$ (the minimum number of synchronous rounds required
for the reduction to complete) follow.

\emph{Strong (radius) bound.}  Because $D_{\min} \ge d_1(x_i,
x_\star)$ holds for every $i \in \operatorname{supp}(\mu)$, the
completion depth is at least the maximum,
\begin{equation}
  D_{\min} \;\ge\; r_\mu
    \;:=\; \max\bigl\{d_1(x, x_\star) : x \in \operatorname{supp}(\mu)\bigr\},
\end{equation}
which is bound (ii) of the proposition.

\emph{Weak (Wasserstein) bound.}  Multiplying each individual
inequality $D_{\min} \ge d_1(x_i, x_\star)$ by $a_i \ge 0$ and
summing, using $\sum_i a_i = 1$,
\begin{equation}
  D_{\min}
    \;=\; D_{\min} \sum_i a_i
    \;\ge\; \sum_i a_i\, d_1(x_i, x_\star)
    \;=\; W_1(\mu, \nu),
\end{equation}
with equality iff all sources sit at a single distance from
$x_\star$; this is the weaker consequence noted in the
proposition.

Bound (i) concerns transport work $\sum_i a_i \ell_i$ rather
than depth: any path from $x_i$ to $x_\star$ has length
$\ell_i \ge d_1(x_i, x_\star)$, so
$\sum_i a_i \ell_i \ge W_1(\mu, \nu)$.  The statements are
compatible with the time-expanded-network formalism for flows
over time~\cite{skutella2009}, but the elementary arguments
above do not require it.

\textbf{Step 3 --- Attainment by shortest-path routing.}
Monotone Manhattan paths give $\ell_i = d_1(x_i, x_\star)$ for
every source.  Two attainment statements follow, each for a
different quantity.

\emph{Work attainment.}  The transport cost matches the
Wasserstein distance exactly,
\begin{equation}
  \sum_{i=1}^{P} a_i \, \ell_i
    \;=\; \sum_{i=1}^{P} a_i \, d_1(x_i, x_\star)
    \;=\; W_1(\mu, \nu),
\end{equation}
so bound (i) is tight for any shortest-path schedule, serial or
parallel.

\emph{Depth attainment.}  Under non-congesting parallel
dispatch --- each source sends its payload simultaneously along
its own monotone dimension-order Manhattan path --- the schedule
completes in exactly $\max_i d_1(x_i, x_\star) = r_\mu$ rounds,
so bound (ii) is tight.  With bandwidth limits the completion
time degenerates toward the total-work figure, approaching
$\sum_i \ell_i$ rather than $r_\mu$ rounds.

\textbf{Step 4 --- Continuous elevation: Brenier under
regularisation.}
To obtain the measure-theoretic elevation, embed the finite grid
into $\Omega \subset \mathbb{R}^2$ by identifying each node
$x \in V$ with a unit cell $Q_x \subset \Omega$, and define
mollified absolutely-continuous approximants
\begin{equation}
  \mu_\varepsilon
    = \sum_{i=1}^{P} a_i \, |Q_i^\varepsilon|^{-1} \,
      \mathbf{1}_{Q_i^\varepsilon}(x) \, dx,
  \qquad
  \nu_\varepsilon
    = |Q_\star^\varepsilon|^{-1} \,
      \mathbf{1}_{Q_\star^\varepsilon}(x) \, dx,
\end{equation}
where $Q_i^\varepsilon, Q_\star^\varepsilon$ are shrinking squares
centred at $x_i, x_\star$.  For the quadratic cost
$c(x,y) = |x-y|^2$, Brenier's polar factorisation theorem
\cite{brenier1991} gives a unique optimal transport map
$T_\varepsilon = \nabla \phi_\varepsilon$ as the gradient of a
convex potential, provided $\mu_\varepsilon$ is absolutely
continuous with respect to Lebesgue measure
\cite{figalli2014,villani2009}.  Passing from the Euclidean $W_2$
Brenier picture back to the discrete $W_1$ problem on the grid
graph is not a direct application of Brenier's theorem --- the
source in the hardware problem is atomic and the cost is
Manhattan.  The continuous Brenier map $T_\varepsilon$ does
not constrain the discrete XY routing schedule operationally:
it is a conceptual elevation of the regularised problem
$(\mu_\varepsilon, \nu_\varepsilon)$, not a bound on the
discrete routing algorithm.  Peyr\'e and
Cuturi~\cite[§3.1]{peyre2019} make the point sharply: for
discrete measures on finite metric spaces, Monge maps need not
exist, and Brenier's theorem does not apply literally; the
correct formulation at the discrete level remains the
Kantorovich relaxation, which allows mass-splitting through
couplings.  The discrete $W_1$ lower bound of Step~2 is the
mathematically binding statement; the Brenier picture merely
records that this lower bound sits inside a well-understood
continuous optimal-transport framework.

\textbf{Step 5 --- Wall-clock bound.}
Let $K_{\mathrm{arch}} \ge 0$ absorb inter-node forwarding and
local-merge overhead as a fixed architecture-dependent cycle
count.  Combining Steps~2--4, the wall-clock completion time
(hops $\times$ cycles-per-hop $\times$ seconds-per-cycle)
satisfies
\begin{equation}
  T_{\mathrm{grid}} \;\ge\;
    r_\mu \cdot t_{\mathrm{edge}} \cdot t_{\mathrm{cycle}},
\end{equation}
and, in the idealised non-congesting regime, a parallel
shortest-path schedule achieves
\begin{equation}
  T_{\mathrm{grid}}
    \;\le\;
    \bigl(r_\mu \cdot t_{\mathrm{edge}}
      + K_{\mathrm{arch}}\bigr) \cdot t_{\mathrm{cycle}}.
\end{equation}
The weaker inequality $T_{\mathrm{grid}} \ge W_1(\mu, \nu)
\cdot t_{\mathrm{edge}} \cdot t_{\mathrm{cycle}}$ follows from
$r_\mu \ge W_1(\mu, \nu)$.  The constant $K_{\mathrm{arch}}$ is
a finite non-negative integer fixed by graph realisation and is
not a mathematical object; all statements hold uniformly in
$K_{\mathrm{arch}}$.

\textbf{Step 5\textsuperscript{$\prime$} --- Compressive-reduction
edge bound.}  For bound (iii), suppose the computation is a
compressive monoid reduction and every active source must
influence the final sink value.  Let $E' \subseteq E$ be the
set of edges that carry non-identity information during the
execution.  If some $x \in \operatorname{supp}(\mu)$ were not
connected to $x_\star$ in the subgraph $(V, E')$, no
information from $x$ could reach $x_\star$, contradicting the
assumption that $x$ influences the final aggregate.  Hence
$(V, E')$ contains a connected subgraph spanning
$\operatorname{supp}(\mu) \cup \{x_\star\}$, so
$|E'| \ge \operatorname{St}_G(\operatorname{supp}(\mu) \cup \{x_\star\})$,
the graph-Steiner-tree cost.

\textbf{Step 6 --- Maximum edge congestion under all-to-one
routing.}  The route-incidence matrix $M \in \{0,1\}^{P \times
|E|}$ has rows indexed by sources (each with one deterministic
route to $x_\star$) and columns indexed by edges.  Under
deterministic dimension-order routing with a single sink, the
maximum edge congestion
$N_{\max} := \max_e |\{r : M_{r,e} = 1\}|$
is attained on the sink-adjacent edges of the sink trunk and is
$\Theta(P)$, not $\Theta(\sqrt{P})$ as a uniform-sink heuristic
would suggest.  For the corner-sink variant
$x_\star = (0, 0)$ on the $L \times L$ grid, the edge between
$(0, 1)$ and $(0, 0)$ carries all $L \cdot (L-1) = \Theta(P)$
sources above row zero, as the computation in
Lemma~\ref{lem:p1-variance} makes explicit.  Attainment
of bound (ii) in the non-congesting idealisation therefore
requires either a routing scheme that balances load away from
the sink trunk (Valiant-style two-phase routing), compressive
aggregation that replaces the trunk bottleneck by a balanced
reduction tree (bound (iii)), or shortcut augmentation
(\S\ref{sec:prop5}).  This sets the stage for the negative
result below.

\end{proof}

\paragraph{Bottleneck variance: a negative result for
corner-sink dimension-order routing.}
A plausible follow-up to Proposition~\ref{prop:p1} is a
concentration claim for the completion time of an individual
route under independent Bernoulli source activation.  The
following proposition shows that the naive expectation ``edge
loads are weakly correlated, so per-route completion time
concentrates as in \S6 of \cite{tropp2012}'' fails in the
simplest deterministic setting: the variance of the
completion-time functional along the sink-column trunk scales
as $\Theta(f_{\mathrm{act}}(1-f_{\mathrm{act}})P^2)$, dominated
by a single bottleneck, rather than as any $o(P^2)$ quantity.

\begin{lemma}[Bottleneck variance for corner-sink
dimension-order routing]\label{lem:p1-variance}
Let $G_n = \{0, \dots, n-1\}^2$ be the $n \times n$ grid with
$P = n^2$, let $x_\star = (0, 0)$, and route each source
$u = (a, b)$ first horizontally to $(0, b)$ and then vertically
along the sink column to $(0, 0)$.  Let $X_u \sim
\mathrm{Bernoulli}(f_{\mathrm{act}})$ be independent activation
indicators with $f_{\mathrm{act}} \in (0, 1)$, define the
directed edge load $\Lambda_e := \sum_{u : e \in r(u)} X_u$,
and let $r_v$ be the vertical route from the top-column source
$v = (0, n-1)$ to the sink.  Write the aggregate route-load
functional
\begin{equation}
  Y_v \;:=\; \sum_{e \in r_v} \Lambda_e
\end{equation}
(a load count, not directly a wall-clock time).  Then
\begin{equation}
  \operatorname{Var}(Y_v)
    \;=\; \Theta\bigl(f_{\mathrm{act}}(1 - f_{\mathrm{act}}) n^4\bigr)
    \;=\; \Theta\bigl(f_{\mathrm{act}}(1 - f_{\mathrm{act}}) P^2\bigr).
\end{equation}
In particular, any uniform upper bound of the form
$\operatorname{Var}(Y_v) = O(f_{\mathrm{act}} P^{3/2})$ is false
for this routing model.
\end{lemma}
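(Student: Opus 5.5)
The plan is to avoid bounding covariances between edge loads and instead rewrite $Y_v$ as a weighted sum of the \emph{independent} activation indicators. Then the variance is an explicit finite sum that can be evaluated in closed form. Exchanging the order of summation,
\begin{equation}
  Y_v \;=\; \sum_{e \in r_v} \sum_{u : e \in r(u)} X_u
      \;=\; \sum_{u \in G_n} c_u \, X_u,
  \qquad
  c_u \;:=\; |r(u) \cap r_v|,
\end{equation}
where $c_u$ counts the directed edges shared by the route of $u$ and the trunk route $r_v$. This step is pure bookkeeping. Once it is done, independence of the $X_u$ gives
\begin{equation}
  \operatorname{Var}(Y_v) \;=\; f_{\mathrm{act}}(1-f_{\mathrm{act}}) \sum_{u} c_u^2 .
\end{equation}

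The key combinatorial step is computing $c_u$ exactly. Since $v = (0,n-1)$ lies on the sink column, $r_v$ consists only of the directed vertical edges $e_j : (0,j) \to (0,j-1)$ for $j = 1, \dots, n-1$, and has no horizontal segment. A source $u = (a,b)$ first travels horizontally within row $b$, and these horizontal edges lie off column $0$, so they never meet $r_v$. It then descends the sink column from $(0,b)$, using exactly $e_b, e_{b-1}, \dots, e_1$, all with the same orientation as $r_v$. Hence $c_u = b$, independently of $a$. This is the step I expect to need the most care. One must check that the directed-edge convention matches, so that no route traverses a trunk edge in the opposite direction, and that the horizontal legs contribute nothing. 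Equivalently, $\Lambda_{e_j} = \sum_{b \ge j} \sum_a X_{(a,b)}$ has mean $f_{\mathrm{act}} n(n-j)$. This agrees with the $L(L-1)$ count in Step~6 of the proof of Proposition~\ref{prop:p1} at $j=1$.

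Summing over all sources,
\begin{equation}
  \sum_u c_u^2 \;=\; n \sum_{b=0}^{n-1} b^2 \;=\; \frac{n^2 (n-1)(2n-1)}{6}.
\end{equation}
For $n \ge 2$ this lies between $n^4/12$ and $n^4/3$, so it is $\Theta(n^4) = \Theta(P^2)$, and the asserted two-sided bound follows with explicit constants. For the final claim, fix $f_{\mathrm{act}} \in (0,1)$. Then $\operatorname{Var}(Y_v) / (f_{\mathrm{act}} P^{3/2}) \ge (1-f_{\mathrm{act}})\, n/12 \to \infty$, so no bound of the form $O(f_{\mathrm{act}} P^{3/2})$ can hold uniformly in $n$. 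The point to stress in the write-up is that the $\Theta(P^2)$ comes from the large weights $c_u = b$ on distant sources, concentrated on the single sink-column bottleneck, and not from any correlation between edge loads.
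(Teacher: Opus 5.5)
Your proposal is correct and takes essentially the same route as the paper. The paper also interchanges the order of summation to write $Y_v = \sum_{a,b} b\,X_{a,b}$, so that each row-$b$ source carries weight $b$ on the trunk, and then uses independence to get $\operatorname{Var}(Y_v) = f_{\mathrm{act}}(1-f_{\mathrm{act}})\, n \cdot \frac{(n-1)n(2n-1)}{6}$. Your explicit constants $n^4/12 \le \sum_u c_u^2 \le n^4/3$ and the ratio bound $(1-f_{\mathrm{act}})\,n/12 \to \infty$ simply make quantitative the paper's $\Theta$ estimate and its remark that the refuted bound is smaller by a factor of $\Theta(\sqrt{P})$.
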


\begin{proof}
(Sink-trunk geometry and load-weighting:
Figure~\ref{fig:bottleneck-cartoon}.)
Index the sink-column edges by
$e_j := ((0, j), (0, j-1))$ for $j = 1, \dots, n-1$; the route
$r_v$ consists of exactly these $n-1$ edges.  A source $u = (a,
b)$ uses edge $e_j$ iff $b \ge j$, so
$\Lambda_{e_j} = \sum_{a = 0}^{n-1} \sum_{b = j}^{n-1}
X_{a, b}$.  Interchanging the summation order in
$Y_v = \sum_{j = 1}^{n-1} \Lambda_{e_j}$ gives
\begin{equation}
  Y_v
    \;=\; \sum_{a = 0}^{n-1} \sum_{b = 1}^{n-1} b \, X_{a, b},
\end{equation}
since a source in row $b$ contributes to exactly the vertical
edges $e_1, \dots, e_b$ and therefore has weight $b$.  The
$X_{a, b}$ are independent Bernoulli random variables with
variance $f_{\mathrm{act}}(1 - f_{\mathrm{act}})$, so
\begin{equation}
  \operatorname{Var}(Y_v)
    \;=\; f_{\mathrm{act}}(1 - f_{\mathrm{act}})
          \sum_{a = 0}^{n-1} \sum_{b = 1}^{n-1} b^{2}
    \;=\; f_{\mathrm{act}}(1 - f_{\mathrm{act}}) \, n
          \cdot \frac{(n-1) n (2n-1)}{6}
    \;=\; \Theta\bigl(f_{\mathrm{act}}(1 - f_{\mathrm{act}}) n^4\bigr).
\end{equation}
Substituting $P = n^2$ gives the stated order.  A
$O(f_{\mathrm{act}} P^{3/2})$ bound is smaller by a factor
$\Theta(\sqrt{P})$ and is therefore refuted.
\end{proof}

\begin{figure}[!b]
\centering
\begin{tikzpicture}[
  meshnode/.style={rectangle, draw=cybiontblue!75, fill=cybiontblue!10,
    line width=0.35pt, minimum size=0.44cm, inner sep=0pt},
  originsm/.style={rectangle, draw=cybiontblue!95!black,
    fill=cybiontblue!35, line width=0.7pt,
    minimum size=0.44cm, inner sep=0pt},
  srcdot/.style={circle, draw=none, fill=orange!80!black,
    minimum size=1.6mm, inner sep=0pt},
  trunkedge/.style={draw=red!75!black, line cap=round},
  horizleg/.style={->, >=Stealth, draw=orange!70!black, line width=0.35pt,
    dash pattern=on 1pt off 1.3pt, opacity=0.55},
  lbl/.style={font={\scriptsize\linespread{1.18}\selectfont}, text=darkgray, align=left, inner sep=1pt}
]
\def\sz{0.44}
\def\LL{5}
% grid
\foreach \i in {0,...,\LL}
  \foreach \j in {0,...,\LL}
    \node[meshnode] at (\i*\sz, \j*\sz) {};
\node[originsm] at (0,0) {};
% sources (all non-sink nodes active, depicted as small dots)
\foreach \i in {1,...,\LL}
  \foreach \j in {0,...,\LL}
    \node[srcdot] at (\i*\sz, \j*\sz) {};
\foreach \j in {1,...,\LL}
  \node[srcdot] at (0, \j*\sz) {};
% horizontal legs to column 0 (faint)
\foreach \j in {1,...,\LL}
  \draw[horizleg] (\LL*\sz, \j*\sz) -- (0, \j*\sz);
% trunk edges: thickness scales with load ~ n(n-j+1)
\draw[trunkedge, line width=3.1pt] (0, 0) -- (0, 1*\sz);
\draw[trunkedge, line width=2.5pt] (0, 1*\sz) -- (0, 2*\sz);
\draw[trunkedge, line width=2.0pt] (0, 2*\sz) -- (0, 3*\sz);
\draw[trunkedge, line width=1.5pt] (0, 3*\sz) -- (0, 4*\sz);
\draw[trunkedge, line width=1.0pt] (0, 4*\sz) -- (0, 5*\sz);
% annotations
\node[lbl, anchor=west] at (\LL*\sz+0.25, 0.5*\sz)
  {$\Lambda_{e_1}\!=\!\Theta(P)$\\ sink-adjacent};
\node[lbl, anchor=west] at (\LL*\sz+0.25, 4.5*\sz)
  {$\Lambda_{e_{n-1}}\!=\!\Theta(\sqrt{P})$\\ top of trunk};
\node[lbl, anchor=west] at (\LL*\sz+0.25, 2.5*\sz)
  {trunk edge thickness\\ $\propto$ load $\Lambda_{e_j}$};
\node[lbl, anchor=north] at (0, -0.25)
  {sink $x_\star$};
\node[lbl, anchor=east] at (-0.2, 3*\sz)
  {column~0\\ (sink column)};
\end{tikzpicture}
\caption{Why deterministic corner-sink dimension-order routing
creates coherent variance.  Every source routes horizontally
to column~$0$ (faint orange arrows), then vertically down the
sink column to $x_\star = (0,0)$; sink-column trunk edges near
the sink therefore carry $\Theta(P)$ sources (thickest red
segments).  A source in row $b$ contributes to $b$ trunk
edges, so by Fubini $Y_v = \sum_{a,b} b \, X_{a,b}$ and
$\operatorname{Var}(Y_v) = \Theta(f_{\mathrm{act}}(1 -
f_{\mathrm{act}}) P^2)$.}
\label{fig:bottleneck-cartoon}
\end{figure}

\paragraph{Consequence for concentration claims.}
The bottleneck arises because the sink-column edges have
upstream sets of size $n(n - j + 1) = \Theta(P)$ for small
$j$, and the same source contributes coherently to all
vertical edges between its row and the sink.  Under serial
edge service, $Y_v$ lower-bounds the congestion-induced
completion time along the trunk; a route-load variance bound
of order $O(f_{\mathrm{act}} P^{3/2})$ would therefore require
an additional upstream-load assumption
$\max_{e \in r} |\{u : e \in r(u)\}| = O(\sqrt{P})$ which is
violated by corner-sink dimension-order routing.  The
expectation along the trunk is $\mathbb{E}[Y_v] = \Theta(
f_{\mathrm{act}} n^3) = \Theta(f_{\mathrm{act}} P^{3/2})$, so
relative concentration
$\operatorname{Var}(Y_v) / \mathbb{E}[Y_v]^2 = \Theta((1 -
f_{\mathrm{act}}) / (f_{\mathrm{act}} P))$ does still hold for
this functional, but the deterministic routing scheme exposes
the bottleneck in absolute variance rather than eliminating
it.  The architectural
consequence is that a grid substrate committed to
single-sink-saturation by deterministic routing should either
use compressive aggregation (bound (iii) of
Proposition~\ref{prop:p1}, which constrains edges-used rather
than per-route variance), load-balanced routing (e.g.\
Valiant-style two-phase randomisation), or shortcut
augmentation (\S\ref{sec:prop5}).

\section{Proposition 2 --- Sparse-participation scaling advantage}
\label{sec:prop2}

\begin{proposition}\label{prop:p2}
Let $f_{\mathrm{act}} \in (0, 1]$ be the fraction of participants
active per output, let $P$ be the grid graph size, and let $N$ be
the number of participants in a dense communication graph.  Under
the latency decompositions stated below --- a geometry-dominated
grid-side model and a fixed-overhead dense-collective model,
both explicit assumptions about communication-fabric behaviour
rather than graph-theoretic facts --- the ratio
$T_{\mathrm{cluster}}(f_{\mathrm{act}}, N) /
 T_{\mathrm{grid}}(f_{\mathrm{act}}, P)$ is strictly monotonically
increasing in $1/f_{\mathrm{act}}$ once a single inequality
relating the communication-overhead constants of the two fabrics
holds.  Literal divergence requires an additional asymptotic in
$N$ or in the ratio $M_P / (A_N + B_N)$.
\end{proposition}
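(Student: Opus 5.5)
The plan is to make the two latency models explicit, because the statement says they are "stated below" and the result is essentially algebraic once they are fixed. On the grid side I take, following Step~5 of Proposition~\ref{prop:p1}, a geometry-dominated model
\[
T_{\mathrm{grid}}(f_{\mathrm{act}},P) \;=\; \bigl(c_G\sqrt{P} + K_{\mathrm{arch}}\bigr)t_{\mathrm{cycle}} \;+\; f_{\mathrm{act}}\,C_{\mathrm{grid}}(P),
\]
where $c_G\sqrt{P}$ is the support-radius term $r_\mu t_{\mathrm{edge}}$ and is independent of $f_{\mathrm{act}}$. The remaining term is the local compute of the $f_{\mathrm{act}}P$ active nodes and scales linearly in $f_{\mathrm{act}}$. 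I will write $M_P := (c_G\sqrt{P}+K_{\mathrm{arch}})t_{\mathrm{cycle}}$ for the fixed geometric part and $g_P := C_{\mathrm{grid}}(P)$ for the activation-proportional part.

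On the cluster side I use the $\alpha$-$\beta$-$\gamma$ model with Mixture-of-Experts activation. This gives
\[
T_{\mathrm{cluster}}(f_{\mathrm{act}},N) \;=\; A_N + B_N + f_{\mathrm{act}}\,C_N.
\]
Here $A_N$ is the $\alpha$ latency term, for example $\alpha\lceil\log_2 N\rceil$ or $\alpha(N-1)$ depending on the collective. $B_N$ is the fixed control or message overhead that does not vanish as $f_{\mathrm{act}}\to 0$. $C_N$ collects the $\beta$ bandwidth and $\gamma$ compute costs proportional to the active payload volume. I would state these two decompositions as the explicit modelling assumptions the proposition refers to, with all constants positive.

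With $s := 1/f_{\mathrm{act}} \in [1,\infty)$, the ratio becomes
\[
R(s) \;=\; \frac{A_N+B_N + C_N/s}{M_P + g_P/s} \;=\; \frac{(A_N+B_N)s + C_N}{M_P s + g_P}.
\]
This is a Möbius function of $s$, and its derivative has the sign of the determinant $(A_N+B_N)g_P - C_N M_P$. Therefore $R$ is strictly increasing in $1/f_{\mathrm{act}}$ if and only if
\[
\frac{A_N+B_N}{M_P} \;>\; \frac{C_N}{g_P}.
\]
This is the "single inequality relating the communication-overhead constants": the cluster's fixed overhead, measured relative to the grid's geometric constant, must dominate the ratio of the activation-proportional costs. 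Strictness follows because the determinant is nonzero and the denominator is positive.

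For the final sentence of the statement, I note that $R(s)\to (A_N+B_N)/M_P$ as $s\to\infty$. The ratio therefore saturates rather than diverges at fixed $N,P$. Unbounded growth requires sending $(A_N+B_N)/M_P\to\infty$, either through $N\to\infty$ with $A_N$ growing or by letting $M_P/(A_N+B_N)\to 0$. This matches the stated caveat.

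The main obstacle is not the calculus but justifying and pinning down the model. One issue is whether the grid-side $f_{\mathrm{act}}$-dependent term is really linear; Lemma~\ref{lem:p1-variance} shows congestion could add an $f_{\mathrm{act}}$-dependent trunk term of order $f_{\mathrm{act}}P^{3/2}$. I would handle this by assuming compressive aggregation as in bound (iii), so that the trunk load stays fixed-size and is absorbed into $g_P$. I would then remark that the argument goes through for any $T_{\mathrm{grid}}$ affine in $f_{\mathrm{act}}$.
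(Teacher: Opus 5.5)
Your proposal is correct and follows the paper's proof essentially step for step. Both use latency models affine in $f_{\mathrm{act}}$ and form the M\"obius ratio in $x = 1/f_{\mathrm{act}}$, whose derivative has the sign of $(A_N+B_N)c_1 - M_P c_2$ (your $g_P$ and $C_N$ play the roles of $c_1$ and $c_2$); both then take the saturation limit $(A_N+B_N)/M_P$ as $f_{\mathrm{act}}\to 0$. The one cosmetic difference is that the paper reaches the cluster form as a lower bound under a message floor $m(f_{\mathrm{act}})\ge m_0$ and then argues the omitted terms are $f_{\mathrm{act}}$-independent, whereas you posit the affine cluster model exactly, folding any activation-proportional bandwidth cost into $C_N$, so your monotonicity needs no transfer step.
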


\begin{proof}
(Sparse-activation pattern: Figure~\ref{fig:sparse-cartoon}.)
\textbf{Step 1 --- Grid graph-side latency model.}
For a grid-fold reduction on a 2-D grid of $P$ nodes, the fold
depth is constrained by graph geometry.  On a square grid graph the
diameter is $\Theta(\sqrt{P})$; a balanced tree-fold over
independent participants has depth $\Theta(\log P)$
\cite{blelloch1990}.  The binding constraint is geometric, so the
composite model is
\begin{equation}
  T_{\mathrm{grid}}(f_{\mathrm{act}}, P)
    \;=\; c_1 f_{\mathrm{act}}
      + c_w \sqrt{P} \cdot t_{\mathrm{edge}}
      + O(\log P) \cdot t_{\mathrm{merge}}
    \;=\; c_1 f_{\mathrm{act}} + M_P,
\end{equation}
where $M_P$ absorbs the $\sqrt{P}$-dominated routing depth.
The justification for treating the fold as the product-preserving
image of a commutative-monoid reduction is the content of
Proposition~\ref{prop:p3}.

\textbf{Step 2 --- Cluster-side latency model.}
Write $\alpha$ for the startup latency of a single collective hop,
$\beta = 1/B$ for the per-byte bandwidth cost, and $\gamma$ for
the per-byte reduction cost.
Thakur, Rabenseifner, and Gropp~\cite[\S3]{thakur2005} give
explicit $\alpha$-$\beta$-$\gamma$ formulas for several collective
variants on $p$ participants with message size $n$ bytes.  For
recursive doubling at power-of-two $p$,
\begin{equation}
  T_{\mathrm{rec\text{-}dbl}}(p, n)
    = \log p \cdot \alpha
      + n \log p \cdot \beta
      + n \log p \cdot \gamma,
\end{equation}
and for the Rabenseifner long-message all-reduce,
\begin{equation}
  T_{\mathrm{Rab}}(p, n)
    = 2 \log p \cdot \alpha
      + 2 \, \frac{p - 1}{p} \cdot n \beta
      + \frac{p - 1}{p} \cdot n \gamma.
\end{equation}
Patarasuk and Yuan~\cite{patarasuk2009} give the bandwidth-optimal
ring all-reduce at cost $2 \frac{p-1}{p} \cdot n \beta + 2(p-1)
\cdot \alpha$ and note that it is not latency-optimal, whereas
butterfly/tree variants are latency-optimal but not
bandwidth-optimal.  Writing out the composite form in the
sparse-regime variables below, and collecting the
$f_{\mathrm{act}}$-dependent and -independent contributions,
\begin{equation}
  T_{\mathrm{cluster}}(f_{\mathrm{act}}, N)
    \;=\; c_2 f_{\mathrm{act}}
      \;+\; \Theta\!\left(\tfrac{N - 1}{N} \cdot
            \tfrac{m(f_{\mathrm{act}})}{B}\right)
      \;+\; \Theta(\log N) \cdot \alpha.
\end{equation}
This is the only latency form the monotonicity argument below
requires.

\textbf{Step 3 --- Sparse-regime behaviour.}
For small $f_{\mathrm{act}}$, assume the communicated message size
is lower-bounded by a regime-specific floor,
$m(f_{\mathrm{act}}) \ge m_0 > 0$; this captures either fixed
control-plane traffic or a non-vanishing per-output synchronisation
volume.  Under that condition,
\begin{equation}
  T_{\mathrm{cluster}}(f_{\mathrm{act}}, N)
    \;\ge\; c_2 f_{\mathrm{act}} + A_N + B_N,
\end{equation}
with $A_N$ and $B_N$ denoting the $f_{\mathrm{act}}$-independent
bandwidth and latency overheads respectively.

\textbf{Step 4 --- Workload-output invariance under
abelian-monoid fold.}
For the comparison ratio $T_{\mathrm{cluster}} /
T_{\mathrm{grid}}$ to be meaningful, the two architectures must
agree on the output of the \emph{same} workload rather than
merely run structurally analogous schedules.  For the
workload class covered by Proposition~\ref{prop:p3} ---
reductions whose update rule factors into a local map and an
abelian-monoid merge --- the output is a pure function of the
multiset of local values $\{g(p, S_i)\}_{i \in V}$: by
commutativity and associativity of $\oplus$, any reduction
schedule on any graph topology returns the same element of $A$.
The two architectures therefore compute the same workload by
virtue of the algebra of the fold, not by any convergence of
learning dynamics or kernel operators.

The specific form $c_1 f_{\mathrm{act}}$ of the local compute
term is a \emph{sparse-participation cost model} justified at
the application layer.  In a graph organised as $E$
computational units per group where each output activates $k$
units, $f_{\mathrm{act}} = k/E$; if a dense evaluation incurs
per-unit cost $c_{\mathrm{unit}}$, the sparse evaluation
incurs $k \cdot c_{\mathrm{unit}} = f_{\mathrm{act}} \cdot
E \cdot c_{\mathrm{unit}}$, proportional to $f_{\mathrm{act}}$.
This sparse-cost pattern is the mechanism underlying several
production sparse-routing architectures~\cite{shazeer2017,
fedus2022switch}.

\textbf{Step 5 --- Monotonicity in $1/f_{\mathrm{act}}$.}
Write $x := 1/f_{\mathrm{act}}$ and form the ratio
\begin{equation}
  R(x)
    \;=\; \frac{T_{\mathrm{cluster}}}{T_{\mathrm{grid}}}
    \;=\; \frac{c_2 / x + A_N + B_N}{c_1 / x + M_P}
    \;=\; \frac{c_2 + (A_N + B_N) x}{c_1 + M_P \cdot x}.
\end{equation}
Differentiating,
\begin{equation}
  R'(x) \;=\;
    \frac{(A_N + B_N) c_1 - M_P c_2}{(c_1 + M_P x)^2}.
\end{equation}
Thus $R'(x) > 0$ if and only if
\begin{equation}
  (A_N + B_N) c_1 \;>\; M_P c_2.
\end{equation}
This is precisely the condition that the
$f_{\mathrm{act}}$-independent cluster overhead, measured against
the grid routing constant, dominates the corresponding grid graph
scaling term.  Under that inequality, $R$ is strictly
increasing in $x = 1/f_{\mathrm{act}}$; equivalently, the ratio
improves strictly in favour of the grid graph as sparsity grows.
Note that $R(x)$ uses the Step~3 lower bound on $T_{\mathrm{cluster}}$;
the omitted terms in $T_{\mathrm{cluster}}$ (the higher-order
contributions of the Step~2 expansion) are
$f_{\mathrm{act}}$-independent constants under the stated latency
model, so monotonicity of $R$ in $x$ transfers from the lower bound
to the actual ratio $T_{\mathrm{cluster}} / T_{\mathrm{grid}}$.

\textbf{Step 6 --- Strict monotone separation, not divergence.}
Taking $f_{\mathrm{act}} \to 0$ with $P, N$ fixed,
\begin{equation}
  \lim_{f_{\mathrm{act}} \to 0}
    \frac{T_{\mathrm{cluster}}}{T_{\mathrm{grid}}}
    \;=\; \frac{A_N + B_N}{M_P},
\end{equation}
which is a positive constant.  For literal divergence one must
supplement the model with a joint asymptotic, e.g.\ $N \to \infty$
with $P$ fixed, or $M_P = o(A_N + B_N)$.  Without such
supplementation the rigorous conclusion is strict monotone
separation, not divergence --- as recorded in the proposition
statement.
\end{proof}

\begin{corollary}[Divergence under unbounded participant scaling]
\label{cor:p2-divergence}
Under the latency decompositions of Steps~1--2 and the
sparse-regime assumption of Step~3 of the proof of
Proposition~\ref{prop:p2}, with $f_{\mathrm{act}} \in (0, 1]$ and
$P$ both fixed,
\begin{equation}
  \lim_{N \to \infty}
    \frac{T_{\mathrm{cluster}}(f_{\mathrm{act}}, N)}
         {T_{\mathrm{grid}}(f_{\mathrm{act}}, P)}
    \;=\; \infty,
\end{equation}
with explicit asymptotic rate
\begin{equation}
  \frac{T_{\mathrm{cluster}}(f_{\mathrm{act}}, N)}
       {T_{\mathrm{grid}}(f_{\mathrm{act}}, P)}
    \;=\; \Theta(\log N)
  \qquad (N \to \infty).
\end{equation}
\end{corollary}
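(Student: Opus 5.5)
The plan is to work directly with the explicit latency forms of Steps~1--3 in the proof of Proposition~\ref{prop:p2}, holding $f_{\mathrm{act}}$ and $P$ fixed so that the denominator is frozen. By Step~1, $T_{\mathrm{grid}}(f_{\mathrm{act}}, P) = c_1 f_{\mathrm{act}} + M_P$. This is a fixed positive constant independent of $N$, and I write it as $\tau_P$. The whole limit therefore reduces to the $N$-asymptotics of the numerator $T_{\mathrm{cluster}}(f_{\mathrm{act}}, N)$. Both claims follow if I show $T_{\mathrm{cluster}}(f_{\mathrm{act}}, N) = \Theta(\log N)$ as $N \to \infty$: divergence to $\infty$ and the $\Theta(\log N)$ rate after dividing by $\tau_P$.

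For the lower bound I use the Step~2 composite form term by term. The local-compute term $c_2 f_{\mathrm{act}}$ is a constant. The bandwidth term is $\Theta\bigl(\tfrac{N-1}{N} \cdot m(f_{\mathrm{act}})/B\bigr)$. Under the Step~3 floor $m(f_{\mathrm{act}}) \ge m_0 > 0$, and since $\tfrac{N-1}{N} \in [\tfrac12, 1)$ for $N \ge 2$, this term is bounded above and below by positive constants independent of $N$, at least once $m(f_{\mathrm{act}})$ is a fixed quantity for fixed $f_{\mathrm{act}}$. The latency term is $\Theta(\log N)\cdot\alpha$ with $\alpha > 0$ a fixed per-hop startup cost. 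Its lower half gives $T_{\mathrm{cluster}} \ge c\,\alpha \log N$ for some $c > 0$ and all large $N$. Dividing by $\tau_P$ gives the lower bound $\Omega(\log N)$, and hence the limit $= \infty$.

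For the matching upper bound I use the upper halves of the same $\Theta$'s. These give $T_{\mathrm{cluster}} \le c_2 f_{\mathrm{act}} + C_1\, m(f_{\mathrm{act}})/B + C_2 \alpha \log N$. The first two summands are $O(1)$ in $N$, so the ratio is $O(\log N)$. One cross-check is worth recording. The recursive-doubling and Rabenseifner formulas quoted from \cite{thakur2005} have $\alpha$-coefficient $\log p$ and $2\log p$, which is consistent with $\Theta(\log N)$. The ring all-reduce of \cite{patarasuk2009}, by contrast, has $\alpha$-coefficient $2(p-1)$, which is linear in $p$. The corollary's rate therefore relies on the latency-optimal collective implied by the Step~2 composite form, and I would state that choice explicitly rather than leave it implicit.

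The main obstacle is not the algebra but making sure the hidden constants do not drift with $N$. I need $m(f_{\mathrm{act}})$ to be independent of $N$ and finite, so the bandwidth term is $O(1)$, and I need $\alpha$ and the implicit $\Theta$-constants to be uniform in $N$. I would first try to read both directly off the model: $m$ is indexed only by $f_{\mathrm{act}}$, and $\alpha$, $B$ are fabric constants. I would add a one-line remark that if the message size were allowed to grow with $N$, only the lower bound, and so the divergence, would survive, not the exact $\Theta(\log N)$ rate.
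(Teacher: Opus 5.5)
Your proposal is correct and matches the paper's proof. You freeze $T_{\mathrm{grid}}(f_{\mathrm{act}},P)=c_1 f_{\mathrm{act}}+M_P$ as an $N$-independent constant and read $T_{\mathrm{cluster}}=O(1)+\Theta(1)+\Theta(\log N)$ term by term off the Step~2 form under the Step~3 floor $m(f_{\mathrm{act}})\ge m_0>0$, exactly as the paper does; your side remark that the $\Theta(\log N)\cdot\alpha$ term presupposes a latency-optimal collective (the Patarasuk--Yuan ring all-reduce would instead give $\Theta(N)\,\alpha$) is a fair caveat the paper leaves implicit in its composite Step~2 form.
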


\begin{proof}
From Step~2 of the proof of Proposition~\ref{prop:p2},
\begin{equation}
  T_{\mathrm{cluster}}(f_{\mathrm{act}}, N)
    \;=\; c_2 f_{\mathrm{act}}
    \,+\, \Theta\!\left(\tfrac{N-1}{N}\cdot\tfrac{m(f_{\mathrm{act}})}{B}\right)
    \,+\, \Theta(\log N) \cdot \alpha.
\end{equation}
With $f_{\mathrm{act}} \in (0, 1]$ and $P$ fixed and with
$m(f_{\mathrm{act}}) \ge m_0 > 0$ (Step~3), the first term is
$O(1)$ by the bound $f_{\mathrm{act}} \le 1$; the second term
is $\Theta(1)$ as $(N-1)/N \to 1$ and $m(f_{\mathrm{act}})/B$
is a fixed positive constant; the third term is
$\Theta(\log N) \cdot \alpha = \Theta(\log N)$ as $N \to \infty$
(with $\alpha > 0$ a fixed cluster-side startup-latency constant
inherited from the $\alpha$-$\beta$-$\gamma$ model of Step~2).
Summing the three asymptotic orders,
\begin{equation}
  T_{\mathrm{cluster}}(f_{\mathrm{act}}, N)
    \;=\; O(1) + \Theta(1) + \Theta(\log N)
    \;=\; \Theta(\log N)
  \qquad (N \to \infty).
\end{equation}

The grid-side cost
$T_{\mathrm{grid}}(f_{\mathrm{act}}, P) = c_1 f_{\mathrm{act}} + M_P$
depends only on $f_{\mathrm{act}}$ and $P$, both fixed, so
$T_{\mathrm{grid}}(f_{\mathrm{act}}, P) = \Theta(1)$ as
$N \to \infty$.

Taking the ratio,
\begin{equation}
  \frac{T_{\mathrm{cluster}}(f_{\mathrm{act}}, N)}
       {T_{\mathrm{grid}}(f_{\mathrm{act}}, P)}
    \;=\; \frac{\Theta(\log N)}{\Theta(1)}
    \;=\; \Theta(\log N)
    \;\to\; \infty
  \qquad (N \to \infty),
\end{equation}
which establishes both the unboundedness limit and the explicit
asymptotic rate.
\end{proof}

\begin{remark}
The corollary specialises the first of the two divergence
asymptotics flagged in Step~6 ($N \to \infty$ with $P$ fixed) to
an explicit $\Theta(\log N)$ rate.  The dual asymptotic
$M_P = o(A_N + B_N)$ with $N$ fixed and $P \to \infty$ yields
the analogous statement with rate $\Theta\bigl((A_N + B_N)/M_P\bigr)$,
which diverges whenever the antecedent holds; the proof is
structurally identical, replacing the $\log N$ growth of
$B_N$ by the assumed sub-dominance of $M_P$.  The two
asymptotics are not mutually exclusive: joint scaling
$N \to \infty$ and $P \to \infty$ at independent rates yields
intermediate divergence rates determined by which of
$B_N = \Theta(\log N)$ and $M_P = \Theta(\sqrt{P})$ dominates.
\end{remark}

\section{Proposition 3 --- Functorial admissibility and
proof-carrying code}
\label{sec:prop3}

\begin{proposition}\label{prop:p3}
A workload $W = (S, p, U)$ with state space $S$, payload $p$,
and update rule $U$ has \emph{schedule-independent reduction
semantics} if $U$ decomposes into a local map $g(p, S_i)$ and a
commutative associative merge $\oplus$ with $(A, \oplus, 0)$ an
abelian monoid: every binary-tree reduction of the same
multiset of inputs produces the same output.  Under that
decomposition, and additionally under an idealised
non-interfering scheduler --- the scheduling assumption used
to define $\mathbf{HwState}$ below, in which independent local
updates, inter-node hops, and merges proceed without competing
for shared physical resources in the same synchronous cycle
--- the wall-clock bound
\begin{equation}
  T_{\mathrm{wall}}
    \;\le\;
    \Bigl(\operatorname{diam}(G) \cdot
      (t_{\mathrm{edge}} + t_{\mathrm{merge}})
      + \max_i t_{\mathrm{local}}(i)\Bigr) \cdot t_{\mathrm{cycle}}
\end{equation}
(with $t_{\mathrm{edge}}, t_{\mathrm{merge}}, t_{\mathrm{local}}$
in cycles and $t_{\mathrm{cycle}}$ in seconds per cycle) holds,
with admissibility encodable as proof-carrying code via the
monoid-law proof terms of Step~6.
\end{proposition}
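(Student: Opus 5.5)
The proof splits into three parts: an algebraic part (schedule independence), a scheduling part (the wall-clock bound), and a categorical/encoding part (the functor and proof-carrying code).

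\emph{Schedule independence.} I would argue by induction on the number of leaves that every binary-tree reduction of a finite multiset $\{a_1,\dots,a_k\}\subseteq A$ equals the canonical left fold $a_{\sigma(1)}\oplus\cdots\oplus a_{\sigma(k)}$ for any ordering $\sigma$.

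\begin{itemize}
  \item For the inductive step, split a tree at its root into subtrees over sub-multisets $M_1\sqcup M_2$.
  \item Apply the inductive hypothesis to each subtree, so each becomes a left fold.
  \item Associativity then flattens the two folds into a single left fold.
  \item Commutativity plus associativity show that any two orderings agree, since adjacent transpositions generate the symmetric group.
\end{itemize}

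This is the generalised associativity and commutativity theorem for monoids. Idle or failed nodes contribute the identity $0$, so they are harmless.

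\emph{Categorical phrasing.} I would phrase the same fact as follows. The Lawvere theory $\mathbf{CMon}$ has as morphisms $n\to 1$ the words in $n$ variables modulo the commutative-monoid laws, that is, the vectors in $\mathbb{N}^n$. Every binary tree over the leaves $1,\dots,n$ each used once denotes the same morphism $(1,\dots,1)$. A product-preserving functor $F:\mathbf{CMon}\to\mathbf{HwState}$ sends $n\mapsto A^n$ and the generator $2\to 1$ to $\oplus$. Functoriality then forces all tree schedules to be interpreted identically. I would define $\mathbf{HwState}$ with objects the finite powers of $A$ and morphisms the composites of local updates, hops and merges that are executable under the non-interfering scheduler. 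The only content to check is that $F$ preserves products, which holds because the scheduler assumption lets disjoint tuple components evolve independently.

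\emph{Wall-clock bound.} I would give an explicit schedule.

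\begin{enumerate}
  \item Fix a BFS spanning tree $\mathcal{T}$ of $G$ rooted at $x_\star$. Its depth is at most $\operatorname{ecc}(x_\star)\le\operatorname{diam}(G)$.
  \item All nodes compute $g(p,S_i)$ in parallel, finishing within $\max_i t_{\mathrm{local}}(i)$ cycles.
  \item Nodes at depth $h$ then forward their partial aggregate to their parent in level-synchronous waves, from the deepest level upward.
  \item Each parent merges its incoming children's values with its own.
\end{enumerate}

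Each level costs $t_{\mathrm{edge}}+t_{\mathrm{merge}}$ cycles under the idealised scheduler: children's hops proceed in parallel on distinct tree edges, and the merges at a node count as one merge step. With at most $\operatorname{diam}(G)$ levels, summing gives the bound. The correctness of this particular tree order is exactly the schedule independence proved above.

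\emph{Main obstacle.} The delicate step is the per-level merge cost. A node with up to four children would naively need up to four sequential merges, which contradicts the paper's "one local update per cycle" rule. I would handle this by folding the multi-way merge into the idealised non-interfering assumption, treating a bounded-arity merge as one $t_{\mathrm{merge}}$ step. Alternatively, I would pipeline it: children's values arrive one per edge per cycle and are absorbed on arrival, which at worst multiplies by the degree constant $4$. I would state this choice explicitly.

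\emph{Proof-carrying code.} The final clause only needs an encoding. The certificate is the tuple of proof terms for associativity, commutativity and the unit law of $(A,\oplus,0)$, checkable by a type checker. I would note that this is a sufficient criterion, not a necessary one.
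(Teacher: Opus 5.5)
Your proposal is correct and follows the paper's proof essentially step for step. Schedule independence comes from the commutative-monoid laws, phrased through the Lawvere theory and a product-preserving functor into $\mathbf{HwState}$. The wall-clock bound comes from a breadth-first wavefront tree of depth $\operatorname{ecc}\le\operatorname{diam}(G)$ with one hop and one merge per level, plus parallel local preprocessing. The proof-carrying-code clause is met by shipping the monoid-law proof terms as the certificate.

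Your explicit induction for generalised associativity and commutativity, and your caveat about nodes with several children, are refinements the paper leaves implicit. Note that only your first option reproduces the stated bound, namely counting a bounded-arity merge as a single $t_{\mathrm{merge}}$ step under the idealised scheduler. That is exactly the idealisation the paper's Step~5 makes without saying so.
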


\begin{proof}
(Monoid-fold diagram: Figure~\ref{fig:fold-cartoon}.)
\textbf{Step 1 --- The category $\mathbf{HwState}$.}
Define a small category whose objects are global hardware
configurations $X = (s_v)_{v \in V}$, assigning a local state $s_v$
to each node $v \in V$, and whose morphisms are finite composable
sequences of allowed architectural transitions: local updates,
inter-node edges, and merges.  Identity morphisms are empty
transition sequences; composition is concatenation.  We restrict
the morphism class to \emph{non-interfering} transitions: those
that can be scheduled concurrently without competing for the
same physical edge or node in the same synchronous cycle.
Within this non-interfering subcategory, binary products of
independent node configurations exist componentwise and satisfy
the universal property, giving $\mathbf{HwState}$ the structure
of a category with finite products
\cite[Ch.~2]{maclane1971}.  Transitions that would compete for
shared resources are handled at scheduling time
(\S\ref{sec:prop4} routes around them) and are not part of the
categorical structure used here.

\textbf{Step 2 --- The Lawvere theory of commutative monoids.}
The Lawvere theory $\mathrm{Th}(\mathrm{CM})$ of commutative
monoids is the small category with finite products generated by a
single object together with morphisms realising a nullary unit
$0$ and a binary multiplication $\oplus$ subject to associativity,
commutativity, and unit laws.  A finite-product-preserving functor
$M : \mathrm{Th}(\mathrm{CM}) \to \mathbf{Set}$ is, by definition,
a commutative monoid; more generally, product-preserving functors
into any cartesian category classify monoid \emph{instances} in
that category \cite{lawvere1963,maclane1971,hylandpower2007}.

\textbf{Step 3 --- The admissibility functor $F$.}
Suppose the update rule decomposes as
\(
  U(S, p) = \bigoplus_{i \in V} g(p, S_i),
\)
with $(A, \oplus, 0)$ an abelian monoid.  Define a functor
\begin{equation}
  F : \mathrm{Th}(\mathrm{CM}) \longrightarrow \mathbf{HwState}
\end{equation}
by sending the distinguished generator to the object
``one node carrying a value in $A$'', sending finite products to
$n$-tuples of such nodes, and sending the theory's unit and
multiplication to hardware initialisation and hardware merge
respectively.  Because hardware merge physically implements
$\oplus$ and local preparation implements $g(p, S_i)$, the image
under $F$ of a formal algebraic term is the corresponding hardware
execution.  Preservation of finite products is equivalent to the
global update rule arising from local maps plus monoidal merging,
i.e.\ to the decomposition $(g, \oplus)$.  Existence of such
product-preserving functors is the content of Lawvere's functorial
semantics \cite[\S2]{lawvere1963}.  Uniqueness up to natural
isomorphism follows from generation by a single object:
$\mathrm{Th}(\mathrm{CM})$ is the Lawvere theory generated under
finite products by one distinguished generator $1$, so any
product-preserving functor $F$ is determined by the image $F(1)$
together with the interpretations of the unit
$0 : 1^{0} \to 1$ and the multiplication
$\oplus : 1 \times 1 \to 1$.  If two such functors $F, G$ encode
the same hardware semantics --- meaning an isomorphism
$\eta_1 : F(1) \cong G(1)$ that intertwines their interpretations
of $0$ and $\oplus$ --- the assignment
$\eta_n := \eta_1^{\times n} : F(n) \to G(n)$ extends to every
object.  Naturality for every morphism of
$\mathrm{Th}(\mathrm{CM})$ follows because each such morphism is
generated from finite products, projections, $0$, and $\oplus$,
all of which are preserved by $\eta_1$
\cite[Ch.~6]{maclane1971}.  Hence $F \cong G$ as functors.

\textbf{Step 4 --- Evaluation-order independence.}
For any finite multiset $\{a_1, \dots, a_n\} \subset A$, the
abelian-monoid laws make
\(
  a_1 \oplus a_2 \oplus \cdots \oplus a_n
\)
independent of association and of permutation.  All syntactic
reduction trees over the same multiset define the same morphism
in $\mathrm{Th}(\mathrm{CM})$; under $F$, all corresponding
hardware executions compute the same value.  A compiler is
therefore free to choose any wavefront tree-fold consistent with
the communication graph, at no change in semantics
\cite[\S2]{lawvere1963}.

\textbf{Step 5 --- Wall-clock bound by tree reduction.}
Choose an origin $r \in V$ and reduce toward $r$ along a
breadth-first wavefront.  The depth of the reduction tree equals
the eccentricity $\operatorname{ecc}(r)$ of $r$ in $G$; since
$\operatorname{rad}(G) \le \operatorname{diam}(G) \le
2 \operatorname{rad}(G)$, choosing $r$ to minimise eccentricity
yields depth at most $\operatorname{diam}(G)$.  Each wavefront
level contributes one hop and one local merge, so
$T_{\mathrm{comm+merge}}
  \le \operatorname{diam}(G) \cdot
    (t_{\mathrm{edge}} + t_{\mathrm{merge}})$.
Local preprocessing $g(p, S_i)$ runs in parallel across nodes
and contributes $\max_i t_{\mathrm{local}}(i)$.  Summing the
cycle counts and converting to seconds via $t_{\mathrm{cycle}}$,
\begin{equation}
  T_{\mathrm{wall}}
    \;\le\;
    \Bigl(\operatorname{diam}(G) \cdot
      (t_{\mathrm{edge}} + t_{\mathrm{merge}})
      + \max_i t_{\mathrm{local}}(i)\Bigr) \cdot t_{\mathrm{cycle}}.
\end{equation}
This combines a well-known parallel-reduction depth argument
\cite{blelloch1990} with the graph-metric identity above.

\textbf{Step 6 --- Proof-carrying code at load time.}
The admissibility of a workload is a natural target for the
proof-carrying-code framework of \cite[\S3]{necula1997}.  In
that framework, untrusted code ships with a proof term of a
stated safety property, and a small load-time checker validates
the proof in time linear in its size.  Necula's original
application is memory safety; extending a PCC system to verify
abstract algebraic properties of a supplied $\oplus$ operator
requires additional compiler infrastructure beyond the
memory-bounds case, and we treat this as a natural target for
the framework rather than a deployed artefact.  In the idealised
form: the compiler emits a proof term certifying that the
supplied $\oplus$ satisfies the monoid laws, and the on-die
loader runs a proof-checker on that term before firmware
execution.  The monoid
laws
\begin{equation}
  (a \oplus b) \oplus c = a \oplus (b \oplus c),\quad
  a \oplus b = b \oplus a,\quad
  a \oplus 0 = 0 \oplus a = a,
\end{equation}
can be represented as inductive types in the Calculus of
Inductive Constructions as implemented in Coq
\cite[Ch.~6]{bertot2013}, yielding machine-checkable certificates
that the loader can verify in time linear in proof size.  A
literal encoding of the entire semantics functor $F$ as a CIC
term is not required: for proof-carrying-code verification in
the sense of \cite[\S3]{necula1997}, the compiler need ship only
(i) the carrier $A$, (ii) the operation $\oplus$, (iii) proof
terms witnessing associativity, commutativity, and the unit law,
and (iv) a certificate that the generated reduction schedule
uses only the corresponding local-map and merge primitives.  By
the propositions-as-types interpretation in CIC, the three
monoid laws above \emph{are} the required proof objects; the
loader checks them directly without reifying the categorical
semantics functor.  This weakening is sufficient for the load-
time guarantee, and avoids the stronger claim that the functor
itself is shipped as a CIC term.
\end{proof}

\section{Proposition 4 --- Fault-tolerance under
subcritical percolation}
\label{sec:prop4}

\begin{proposition}\label{prop:p4}
Let $D_{\mathrm{nom}}$ be the nominal graph diameter of a 2-D
grid graph under deterministic XY routing, let $\delta \in [0,
p_c^{\mathrm{site}}(\mathbb{Z}^2))$ be the i.i.d.\ node-failure
probability, let $\Gamma_{\mathrm{nom}}$ denote the nominal
route under XY routing, and let $K_\Gamma$ denote the (random)
number of failed nodes intersected by $\Gamma_{\mathrm{nom}}$.
Then the conditional expected post-detour route length
(measured in grid-graph hops) satisfies
\begin{equation}
  \mathbb{E}\bigl[\,
    \text{route length} \bigm| K_\Gamma = k \bigr]
    \;\le\; D_{\mathrm{nom}} + C_\delta \cdot k,
\end{equation}
with $C_\delta$ a finite constant (depending on the failure
density $\delta$) determined by the size-biased expected
perimeter of route-intersected subcritical failure clusters,
and the unconditional expected wall-clock latency (in seconds,
with $t_{\mathrm{edge}}$ in cycles per hop and
$t_{\mathrm{cycle}}$ in seconds per cycle) satisfies
\begin{equation}
  \mathbb{E}[T_{\mathrm{grid-faulty}}]
    \;\le\;
    \bigl(D_{\mathrm{nom}}
      + C_\delta\, \mathbb{E}[K_\Gamma]\bigr) \cdot
      t_{\mathrm{edge}} \cdot t_{\mathrm{cycle}}.
\end{equation}
\end{proposition}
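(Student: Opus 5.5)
The plan is to decompose the post-detour route into the nominal XY route plus local detours, one per failure cluster hit by $\Gamma_{\mathrm{nom}}$, and to bound each detour by the outer boundary of that cluster.

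\textbf{Deterministic detour bound.} Fix a configuration and let $\mathcal{C}_1,\dots,\mathcal{C}_J$ be the distinct failure clusters (maximal connected components of failed sites, taken with $*$-connectivity so that their outer boundaries are $\mathbb{Z}^2$-connected circuits) that meet $\Gamma_{\mathrm{nom}}$. Necessarily $J \le k$, since each cluster contains at least one of the $k$ intersected failed nodes. When the nominal route enters the blocked segment at $\mathcal{C}_j$, the deflected route follows the outer vertex boundary $\partial^{\mathrm{out}}\mathcal{C}_j$ from the last live node before the cluster to the first live node of $\Gamma_{\mathrm{nom}}$ after it. Such a boundary path exists, and consists of live nodes, exactly because $\mathcal{C}_j$ is finite: in the subcritical regime $\delta<p_c^{\mathrm{site}}$ this holds almost surely. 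The boundary can be traversed in either direction, so the replacement costs at most $|\partial^{\mathrm{out}}\mathcal{C}_j|$ hops in place of the removed nominal segment. This reproduces the $+2$ of Figure~\ref{fig:deflection-cartoon} for an isolated obstacle. Summing over clusters gives
\[
\text{route length} \;\le\; D_{\mathrm{nom}} + \sum_{j=1}^{J} |\partial^{\mathrm{out}}\mathcal{C}_j| \;\le\; D_{\mathrm{nom}} + c\sum_{j=1}^{J}|\mathcal{C}_j|,
\]
with $c \le 8$ since every cluster site has at most $8$ $*$-neighbours. Grid boundary effects only shrink boundaries, and if a cluster covers the sink or source the conditional statement is read on the event that the route exists.

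\textbf{Probabilistic step.} Conditional on $K_\Gamma = k$, each intersected failed node $y$ lies in a cluster $\mathcal{C}(y)$, and $\sum_j |\mathcal{C}_j| \le \sum_{y} |\mathcal{C}(y)|$, a sum with $k$ terms. It therefore suffices to show that $\mathbb{E}[\,|\mathcal{C}(y)| \mid y \text{ failed},\ K_\Gamma=k]$ is bounded uniformly in $y$ and $k$. Unconditionally this is the mean cluster size $\chi(\delta)$. Aizenman--Barsky (together with Menshikov) gives $\chi(\delta)<\infty$ for $\delta<p_c$, with exponential tail decay of $|\mathcal{C}(0)|$ \cite{grimmett1999}.

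\textbf{Main obstacle.} The hard step is that conditioning on $K_\Gamma=k$ and on which route sites fail biases clusters, because route sites of the same cluster are correlated. My plan is to expose the route sites first, then grow each cluster through off-route sites only. The off-route part is a subcritical cluster in independent percolation, so by monotonicity it is stochastically dominated by an unconditioned cluster. A run of $m$ consecutive failed route sites contributes $m$ to $k$ and at most $m$ branching points, each dominated by an independent copy of $|\mathcal{C}(0)|$. This gives $\mathbb{E}[\sum_j|\mathcal{C}_j| \mid K_\Gamma=k] \le \chi(\delta)\,k$ up to a constant, and hence $C_\delta := c\,\chi(\delta)$, or a slightly larger constant absorbing the re-merging of off-route branches via the BK inequality. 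Since the cluster is identified with route sites counted by $k$, this is exactly the size-biased perimeter mentioned in the statement.

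\textbf{Unconditional bound.} Averaging the conditional bound over $k$ gives $\mathbb{E}[\text{length}] \le D_{\mathrm{nom}} + C_\delta\,\mathbb{E}[K_\Gamma]$. Multiplying by $t_{\mathrm{edge}}\,t_{\mathrm{cycle}}$, exactly as in Step~5 of Proposition~\ref{prop:p1}, gives the wall-clock bound.
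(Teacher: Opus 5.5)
\textbf{Gap: the percolation threshold is the wrong one for $*$-clusters.} Your switch to $*$-connectivity is the right geometric move. The healthy sites $*$-adjacent to a finite failed $*$-cluster contain a nearest-neighbour circuit around it; the healthy neighbours of a nearest-neighbour cluster need not. But the probabilistic input must then be about $*$-clusters, and that is where the proposal breaks.

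You assert that each $\mathcal{C}_j$ is a.s.\ finite and that $\chi(\delta)<\infty$ for $\delta<p_c^{\mathrm{site}}(\mathbb{Z}^2)$. That is the threshold for nearest-neighbour clusters. Failed $*$-clusters are clusters of site percolation on the matching lattice, whose threshold is $1-p_c^{\mathrm{site}}(\mathbb{Z}^2)\approx 0.407$ (Russo's duality). For $\delta\in[1-p_c^{\mathrm{site}},\,p_c^{\mathrm{site}})$ the mean $*$-cluster size is infinite. For $\delta>1-p_c^{\mathrm{site}}$ there is a.s.\ an infinite failed $*$-cluster, so with positive probability an intersected route site has no surrounding healthy circuit at all. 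On that window both the existence of your detour and the finiteness of $C_\delta=c\,\chi(\delta)$ fail.

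The obstruction is not specific to your method. On that window the healthy sites have density $1-\delta\le p_c^{\mathrm{site}}$ and do not percolate, so the boundary-tracing picture behind the statement itself breaks down. What your argument does prove is the proposition under $\delta<1-p_c^{\mathrm{site}}(\mathbb{Z}^2)$. There $C_\delta$ is built from the $*$-susceptibility $\chi^{*}(\delta)$, which is finite by Menshikov/Aizenman--Barsky applied to the matching lattice. You should state the result that way.

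\textbf{Comparison with the paper.} The paper's proof covers the full range only because it uses nearest-neighbour clusters, which are subcritical for every $\delta<p_c^{\mathrm{site}}$. It then \emph{assumes} a fault-aware protocol that completes its traversal along ``the cluster's outer boundary of healthy sites''. That boundary is not a nearest-neighbour path in general. The four neighbours of an isolated dead node are pairwise non-adjacent. The paper's own Step~1 detour passes through the diagonal sites $(a,b+1)$ and $(a+2,b+1)$, which its hypothesis does not make healthy. So the paper hides the same window inside an assumption, and your $*$-connectivity is the honest correction.

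On the probabilistic side your route is better than the paper's. The paper bounds a typical cluster via $|\partial C|\le 4|C|$ and passes to the size-biased mean $\mathbb{E}|C|^2/\mathbb{E}|C|$. It then multiplies by $m\le K_\Gamma$ without addressing how conditioning on $K_\Gamma=k$ reshapes the clusters. Your exposure argument (reveal the route sites, after which the off-route sites are still i.i.d.) deals with exactly that.

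\textbf{Simplifications and one missing remark.}
\begin{itemize}
\item You need neither independent copies nor BK. Every off-route site of a cluster meeting the route lies in the off-route cluster of an off-route $*$-neighbour of some failed route site. Each failed route site has at most $8$ such neighbours, so by linearity $\mathbb{E}\bigl[\sum_j|\mathcal{C}_j|\bigm| K_\Gamma=k\bigr]\le k\,\bigl(1+8\chi^{*}(\delta)\bigr)$. Overlaps are merely overcounted.
\item Reading the bound on the event that the route exists is a further conditioning, and you should say why it is harmless. On the off-route product measure, existence is decreasing and $\sum_j|\mathcal{C}_j|$ is increasing in the set of failed sites. By Harris--FKG, conditioning on existence can therefore only lower the conditional expectation.
\end{itemize}
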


\begin{proof}
(Failure-cluster / deflection geometry: Figure~\ref{fig:deflection-cartoon}.)
\textbf{Step 1 --- Isolated dead nodes: the $+2$ hop lemma.}
Fix source $s = (x_s, y_s)$ and destination $t = (x_t, y_t)$ on
$\mathbb{Z}^2$.  Under XY routing, a shortest path has Manhattan
length $D_{\mathrm{nom}} = |x_s - x_t| + |y_s - y_t|$.  Suppose a
node $z$ on the nominal path has failed and its four nearest
neighbours are all healthy.  If the planned segment traverses
$(a, b) \to (a+1, b) \to (a+2, b)$ with $(a+1, b)$ dead, the
detour
\(
  (a,b) \to (a,b+1) \to (a+1,b+1) \to (a+2,b+1) \to (a+2,b)
\)
replaces the original 2-hop segment with a 4-hop segment.  The
penalty is exactly $+2$ edges; a symmetric argument handles
vertical obstructions.  This is the standard behaviour of
deterministic dimension-order routing on 2-D
grid graphs~\cite[Ch.~14]{dally2004}.

\textbf{Step 2 --- Failure clusters and subcriticality.}
Model node failures as independent site percolation on
$\mathbb{Z}^2$ with failure probability $\delta$, and let $C$
denote the failure cluster containing a given failed node.  For
$\delta < p_c^{\mathrm{site}}(\mathbb{Z}^2)$ the percolation
process is in its subcritical phase.  Unlike the bond-percolation
threshold $p_c^{\mathrm{bond}}(\mathbb{Z}^2) = 1/2$, which is a
theorem of Kesten~\cite{kesten1980}, the site-percolation
threshold on the square lattice has no known closed-form exact
value; numerical estimates place it at
$p_c^{\mathrm{site}}(\mathbb{Z}^2) \approx 0.59$
\cite{akhunzhanov2022}, and our proofs work at the abstract
level $\delta < p_c^{\mathrm{site}}(\mathbb{Z}^2)$ without
committing to any particular numeric value.  The standard
survey treatment of lattice dependence of $p_c$ is
\cite[Chs.~1, 5]{grimmett1999}.

\textbf{Step 3 --- Exponential cluster-size decay.}
In the subcritical phase the cluster-size distribution decays
exponentially: there exists $c(\delta) > 0$ such that
\begin{equation}
  \Pr[\, |C(0)| \ge n \,]
    \;\le\; e^{-c(\delta) n}
  \qquad \forall n \ge 1.
\end{equation}
This is the standard subcritical-phase theorem
\cite[Ch.~5]{grimmett1999}.  An immediate consequence is
finiteness of all moments:
\(
  \mathbb{E}[|C(0)|^m] < \infty
\)
for all $m \ge 1$.

\textbf{Step 4 --- Expected detour per cluster.}
Let $C_1, \dots, C_m$ denote the disjoint failure clusters
intersecting the nominal route, and let $\Delta(C_j)$ be the
additive detour around $C_j$.  We assume the routing protocol
is fault-aware and deadlock-free, in the sense that a packet
encountering any finite failed cluster $C_j$ completes its
traversal along the cluster's outer boundary of healthy sites;
concrete fault-tolerant variants of XY routing meeting this
requirement are surveyed in \cite[Ch.~14]{dally2004}.  Under
this protocol assumption, the Manhattan detour around a finite
connected obstacle is at most proportional to its boundary
length, so $\Delta(C_j) \le c_0 \cdot |\partial C_j|$ for a
geometric constant $c_0$ absorbing the boundary-trace
overhead of non-convex cluster shapes.  It remains to bound
$\mathbb{E}|\partial C_j|$.  On $\mathbb{Z}^2$, every occupied
site of a finite cluster $C$ contributes at most four incident
lattice edges, hence the elementary bound
\(
  |\partial C| \le 4 |C|.
\)
Sharpness of the subcritical phase transition
\cite{aizenman1987,duminilcopin2016,grimmett1999} gives the
exponential tail
\(
  \Pr[\,|C(0)| \ge n\,] \le e^{-c(\delta) n}
\)
used in Step~3, which by linearity implies
\begin{equation}
  \mathbb{E} \, |\partial C|
    \;\le\; 4 \, \mathbb{E} |C|
    \;=\; 4 \sum_{n \ge 1} \Pr[\,|C| \ge n\,]
    \;\le\; 4 \sum_{n \ge 1} e^{-c(\delta) n}
    \;<\; \infty,
\end{equation}
a convergent geometric series.  The expectation above is over
a typical cluster (the cluster of a uniformly sampled site);
clusters intersected by a fixed nominal route are
size-biased, so the constant $C_\delta$ used in the proposition
must be taken as the conditional expectation over
route-intersected clusters.  Because the exponential tail
decay of Step~3 applies uniformly to all clusters, the
size-biased expectation is likewise finite: the
size-biased cluster-size distribution
$\tilde{P}(|C| = n) = n \cdot \Pr[|C| = n] / \mathbb{E}|C|$
inherits exponential tails of the same rate $c(\delta)$, so
$\mathbb{E}_{\mathrm{size-bias}}[|C|] = \mathbb{E}[|C|^2] /
\mathbb{E}[|C|] < \infty$ by Step~3, and $C_\delta$ is enlarged
only by a factor depending on $\delta$ but not on cluster size.
Hence
$\mathbb{E}[\Delta(C_j) \mid C_j \cap \Gamma_{\mathrm{nom}}
\ne \emptyset] \le C_\delta$ for a finite constant $C_\delta$
depending on $\delta$.

The sharper ``perimeter grows sub-linearly in cluster radius''
is a \emph{critical}-regime statement that the argument above
does not need and that is not available at the subcritical
square lattice.  The critical scaling limit of cluster
interfaces in 2-D percolation is Schramm-Loewner evolution
$\mathrm{SLE}_6$ with one-arm exponent $5/48$
\cite{smirnov2001,werner2004}, but Smirnov's rigorous form is
for site percolation on the triangular lattice, not for
subcritical site percolation on $\mathbb{Z}^2$.

\textbf{Step 5 --- Total expected detour.}
Since $m \le K_\Gamma$ (each intersected cluster contains at
least one failed node on the nominal route), summing the
per-cluster bound gives, conditional on a realisation
$K_\Gamma = k$,
\begin{equation}
  \mathbb{E}\!\left[
    \sum_{j=1}^{m} \Delta(C_j) \; \Big| \; K_\Gamma = k
  \right]
  \;\le\; C_\delta \cdot k,
\end{equation}
and hence
$\mathbb{E}[\, \text{route length} \mid K_\Gamma = k] \le
D_{\mathrm{nom}} + C_\delta \cdot k$.  Taking unconditional
expectation over $K_\Gamma$,
$\mathbb{E}[\, \text{route length}\,] \le D_{\mathrm{nom}} +
C_\delta \cdot \mathbb{E}[K_\Gamma]$.  Multiplying by the
per-edge cycle cost yields the wall-clock bound stated in the
proposition.

\textbf{Step 6 --- Expectation rather than deterministic worst
case.}  The proposition is stated in expected-value form and
is not a deterministic worst-case bound; separate
high-probability route-length bounds can be derived from the
exponential cluster-size tails of Step~3 after fixing a route
family and applying an appropriate union or domination
argument, but such a tail statement is not needed here.
Independent site failures
admit adversarially dense configurations of arbitrary size with
strictly positive probability $\delta^n$ for any finite $n$,
so a deterministic worst-case bound cannot be established
without restricting the sample space.  The proposition is
stated accordingly in expected-value form; a strict deterministic
version would require additional geometric constraints on
admissible failure patterns.

\textbf{Step 7 --- Behaviour at criticality.}
At $\delta = p_c^{\mathrm{site}}(\mathbb{Z}^2)$ cluster statistics
change qualitatively: the exponential-decay estimate of Step~3
fails, and the expected cluster size diverges.  The
$\mathrm{SLE}_6$ scaling limit \cite{smirnov2001,werner2004}
together with Cardy's crossing formula \cite{cardy1992} describe
the critical cluster interfaces, and the one-arm exponent $5/48$
gives the sharpest known decay of the probability that a
macroscopic interface reaches a given radius.  A rigorous
transfer of these critical-regime results to a quantitative
latency-degradation bound on the square lattice (where Smirnov's
conformal-invariance proof in its direct rigorous form does not
apply) requires a dedicated lemma; we leave the quantitative
near-critical latency bound open.
\end{proof}

\section{Proposition 5 --- Small-world extension and
mean-field universality}
\label{sec:prop5}

A grid graph in which each node has only nearest-neighbour edges
is the simplest graph on which
Propositions~\ref{prop:p1}--\ref{prop:p4} can be stated; most
applications of peer-to-peer grid graphs augment it with a sparse
set of long-range shortcuts, turning each node's neighbourhood
from strictly $\le 4$ grid neighbours into $\le 4 + k$ with $k$
uniformly-random long-range edges per node, for fixed
$k \ge 1$.  This construction is a grid-plus-uniform-shortcut
small-world model in the Watts--Strogatz /
Newman--Watts tradition~\cite{watts1998}.  Adding shortcuts
changes the percolation phase structure in a way that makes
the near-critical analysis left open by
Proposition~\ref{prop:p4} analytically tractable under the
mean-field universality assumption.

\begin{proposition}\label{prop:p5}
Let $G_{\mathrm{sw}}(k) = (V, E_{\mathrm{grid}} \cup
E_{\mathrm{sw}})$ be the graph obtained from the
$\sqrt{P} \times \sqrt{P}$ square grid graph $G$ by adding
$E_{\mathrm{sw}}$ consisting of $k$ long-range edges per
node, each end-point chosen uniformly at random from
$V \setminus \{v\}$, for $k \ge 1$.
\begin{enumerate}
  \item[(i)] \textbf{Typical-distance collapse (standard
    small-world estimate).}  The characteristic path length
    (average shortest-path distance between two uniformly
    sampled nodes) satisfies
    \(
      \mathbb{E}_{u,v}\bigl[d_{G_{\mathrm{sw}}(k)}(u,v)\bigr]
        = O(\log P)
    \)
    as $P \to \infty$, in the Watts--Strogatz / Newman--Watts
    sense \cite{watts1998,bollobas2001}.  A strict high-probability
    graph-diameter theorem for the exact
    grid-plus-$k$-uniform-shortcuts model does not appear in
    the literature known to us; the statement is therefore
    recorded at the typical-distance level that
    Watts and Strogatz originally establish, not as a diameter
    concentration theorem.
  \item[(ii)] \textbf{Mean-field universality (modelling
    conjecture).}  Under the standard physics-level
    universality argument --- rigorous for the $1$-D-ring
    base via the exact generating-function analysis of
    \cite{newman2000smallworld}, and extrapolated to the
    $2$-D-grid base via the tree-like-neighbourhood
    heuristic of \cite[Ch.~12]{bollobas2001} --- site
    percolation on $G_{\mathrm{sw}}(k)$ is conjectured to
    lie in the Erd\H{o}s--R\'enyi (mean-field) universality
    class with exponents $\nu = 1/2$, $\gamma = 1$,
    $\beta = 1$, $\tau = 5/2$, $\sigma = 1/2$.  A rigorous
    derivation for the $2$-D-grid-plus-uniform-shortcuts
    model is not available in the literature; the
    proposition records this as a physics-level modelling
    claim, not a theorem.
  \item[(iii)] \textbf{Near-critical closure (under~(ii)).}
    \emph{Assuming} the universality conjecture of (ii), the
    quantitative near-critical latency-degradation bound
    requested by \textnormal{P4.4} is analytically tractable
    on $G_{\mathrm{sw}}(k)$ by substitution of the
    mean-field exponents; the unproven planar universality
    conjecture that blocks the pure-$\mathbb{Z}^2$ case does
    not apply.
\end{enumerate}
\end{proposition}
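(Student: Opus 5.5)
The main work is part~(i); parts~(ii) and~(iii) are a modelling conjecture and a conditional consequence, so for them the proof only records what is assumed and what follows.

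For~(i) I would not use the grid geometry at all beyond connectivity: the shortcut edges alone already form a random graph with minimum degree $k$. Let $H$ be the graph on $V$ with edge set $E_{\mathrm{sw}}$. Since $d_{G_{\mathrm{sw}}(k)} \le d_H$ wherever $d_H$ is finite, and $d_{G_{\mathrm{sw}}(k)} \le \operatorname{diam}(G) = O(\sqrt{P})$ always, it suffices to show three things:
\begin{itemize}
  \item for a typical pair $u,v$, $d_H(u,v) = O(\log P)$ with probability $1 - o(1/\sqrt{P})$;
  \item the failure event contributes $o(1)$ to the expectation, because on it the grid bound $O(\sqrt{P})$ applies;
  \item hence $\mathbb{E}_{u,v}[d_{G_{\mathrm{sw}}(k)}(u,v)] = O(\log P)$.
\end{itemize}

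To bound $d_H$ I would run simultaneous breadth-first explorations from $u$ and $v$ in the combined graph, using grid edges as well so that each explored vertex has out-degree $\ge 4+k$. The branching that matters, however, comes from the fresh uniform shortcut endpoints. While fewer than $P^{1/2+\epsilon}$ vertices are explored, each newly revealed shortcut endpoint lands in an already explored set with probability $o(1)$. A Chernoff bound then shows that the explored ball grows by a factor of at least $1 + k/2$ per two generations. After $O(\log P)$ generations both balls have size $P^{1/2+\epsilon}$. A birthday-type estimate then says that the fresh shortcuts of one ball hit the other with probability $1 - \exp(-P^{2\epsilon})$, which gives the required $1 - o(1/\sqrt{P})$ failure probability. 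This is essentially the argument of \cite[Ch.~10]{bollobas2001} for random graphs of bounded minimum degree, and the grid edges only help.

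\textbf{Main obstacle.} The expansion step for $k=1$ is the delicate part. Each vertex then contributes only one random endpoint, so growth must come from exploring whole grid neighbourhoods: a grid ball of radius $r$ carries about $r^2$ outgoing shortcuts. I would handle this by exploring in rounds, where each round takes grid balls of constant radius $R$ around the frontier vertices and then follows their shortcuts. This gives a branching factor of order $R^2 k > 2$. The correlations induced by overlapping grid balls are controlled by the stopping rule at size $P^{1/2+\epsilon}$.

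For~(ii), no proof is claimed. Since the statement labels it a conjecture, the proof should only verify consistency. For the 1-D ring base, the generating-function computation of \cite{newman2000smallworld} gives a finite critical point at which the cluster-size generating function has a square-root singularity, hence $\tau = 5/2$, $\sigma = 1/2$, $\gamma = \beta = 1$. The hyperscaling-above-upper-critical-dimension relation then gives $\nu = 1/2$. For the grid base, I would record the locally tree-like heuristic: on scales beyond the typical shortcut spacing $O(1)$, neighbourhoods look like a Galton--Watson tree whose offspring distribution has a finite second moment, and such trees have mean-field exponents. Part~(iii) is then a direct substitution of these exponents into the Step~3--4 bound of Proposition~\ref{prop:p4}:
\begin{itemize}
  \item replace the exponential cluster tail by the mean-field form $\Pr[|C|\ge n] \asymp n^{-3/2} e^{-n |\delta - p_c|^{2}}$, obtained from $\tau = 5/2$ and $1/\sigma = 2$;
  \item this makes the size-biased mean cluster size $\asymp |\delta - p_c|^{-1}$ by $\gamma = 1$;
  \item it therefore gives an explicit $C_\delta = O(|\delta - p_c|^{-1})$, conditional on~(ii).
\end{itemize}
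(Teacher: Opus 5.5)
For (ii) and (iii) you follow the paper: Newman--Moore for the ring base, a tree-like heuristic for the grid base, and substitution of mean-field exponents into Proposition~\ref{prop:p4}. For (i) your route is genuinely different.

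The paper does not so much prove (i) as point to the Watts--Strogatz characteristic-path-length calculation and to a coupling with a sparse Erd\H{o}s--R\'enyi overlay of mean degree $k+4$. But Watts--Strogatz treat a 1-D ring, largely numerically. Moreover, the four grid edges per node are deterministic and local, so they do not couple into an Erd\H{o}s--R\'enyi graph of mean degree $k+4$; for $k<4$ that graph would need more long-range edges than the $kP$ available.

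Your argument works directly on the stated model. You split $\mathbb{E}_{u,v}[d]$ into an $O(\log P)$ good event and a bad event costing $O(\sqrt{P})\cdot o(1/\sqrt{P})$, grow two explorations to size $P^{1/2+\epsilon}$, and close with a birthday collision. It also yields more than the statement: once the per-vertex failure probability is $o(1/P)$, a union bound gives an $O(\log P)$ diameter with high probability, which is exactly what the paper says it cannot cite. The paper's route buys brevity, with rigour deferred to citations that do not quite cover this model.

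There are three points to tighten.

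\emph{The first bullet.} State it for $d_{G_{\mathrm{sw}}(k)}$, not $d_H$. For $k=1$, $H$ is a random-mapping graph: a random pair is disconnected with positive probability, and typical distances are $\Theta(\sqrt{P})$. So the grid is essential there, not merely helpful. Your obstacle paragraph already fixes this, and $R=1$ suffices, since the grid neighbours of a fresh shortcut endpoint carry fresh shortcuts of their own.

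\emph{The early-phase failure probability.} Chernoff bounds give tiny failure probabilities only once the frontier is polynomially large, and your $\exp(-P^{2\epsilon})$ covers only the birthday step. While the explored set has size at most $P^{\eta}$ with $\eta<1/4$, the only failure mode is collisions, whose expected number is $O(P^{2\eta-1})$. A single collision costs only $O(1)$ generations, because the grid keeps supplying fresh shortcuts. Hence failure needs at least two collisions, which has probability $O(P^{4\eta-2})=o(1/P)$. Say this explicitly.

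\emph{The tail in (iii).} The tail $n^{-3/2}e^{-n|\delta-p_c|^2}$ is that of a uniformly chosen \emph{cluster}. Step~3 of Proposition~\ref{prop:p4} bounds the cluster $C(0)$ of a fixed site, whose mean-field tail is $n^{-1/2}e^{-n|\delta-p_c|^2}$. With your tail, $\mathbb{E}|C(0)|$ would stay bounded at $p_c$, contradicting $\gamma=1$. Your numbers are consistent only if your one size bias is what turns a uniform cluster into $C(0)$, giving $\mathbb{E}|C(0)|\asymp|\delta-p_c|^{-1}$.

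However, Step~4 of Proposition~\ref{prop:p4} size-biases $C(0)$ once more, and $\mathbb{E}|C(0)|^2/\mathbb{E}|C(0)|\asymp|\delta-p_c|^{-2}$. Literal substitution into Steps~3--4 therefore gives $C_\delta=O(|\delta-p_c|^{-2})$. Your exponent $-1$ amounts to charging each failed route site $\mathbb{E}|C(0)|$ with no second size bias; this is also what the paper's bare $\gamma=1$ substitution implicitly uses. Either is a defensible heuristic under (ii), but say which you mean.

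A minor point: $\nu=1/2$ is the mean-field input, or hyperscaling \emph{at} $d_c=6$; hyperscaling fails above $d_c$.
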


\begin{proof}
(Shortcut overlay: Figure~\ref{fig:smallworld-cartoon}.)
(i) The grid graph has typical distance $\Theta(\sqrt{P})$.
Adding $k \ge 1$ long-range edges per node in expectation
shrinks the typical shortest-path length to $O(\log P)$; the
Watts--Strogatz characteristic-path-length
calculation~\cite{watts1998}, together
with the coupling between small-world graphs and sparse
Erd\H{o}s--R\'enyi overlays of mean degree $k + 4$
\cite[Ch.~10]{bollobas2001}, establishes the bound on
$\mathbb{E}_{u,v}[d(u,v)]$ used in (i).  A strict-diameter
bound on the exact grid-plus-$k$-uniform-shortcuts model (a
max over all pairs rather than an expectation) requires more
than this coupling and is not invoked.

(ii) Newman--Moore~\cite{newman2000smallworld} give an exact
generating-function solution for site and bond percolation on
the 1-D-ring-based Watts--Strogatz graph, showing that
giant-cluster formation is driven by the long-range-edge
distribution and that the critical exponents are those of the
Erd\H{o}s--R\'enyi random graph.  For a 2-D-grid base the
explicit generating functions differ, but the tree-like
neighbourhood structure around a typical node survives
unchanged under any positive shortcut density (Bollob\'as
\cite[Ch.~12]{bollobas2001}), and the standard physics-level
universality argument predicts the same exponents
$\nu = 1/2$, $\gamma = 1$, $\beta = 1$, $\tau = 5/2$,
$\sigma = 1/2$.  A rigorous derivation of this universality
step for the $2$-D-grid-plus-uniform-shortcuts model does not
appear in the literature known to us; (ii) is therefore
recorded as a modelling conjecture rather than a theorem.

(iii) \emph{Assume (ii).}  Fix $\delta < p_c(G_{\mathrm{sw}}(k))$.
The mean-field scaling of the cluster-size distribution reads,
as $\delta \uparrow p_c$,
\begin{equation}
  \mathbb{E}[|C|]
    \;\sim\; \frac{1}{p_c(G_{\mathrm{sw}}(k)) - \delta}
    \quad (\gamma = 1),
  \qquad
  \xi
    \;\sim\; \frac{1}{\bigl(p_c(G_{\mathrm{sw}}(k))
      - \delta\bigr)^{1/2}}
    \quad (\nu = 1/2),
\end{equation}
where $\xi$ is the correlation length.  Plugging these into
the perimeter-bound argument of Step~5 of the proof of
Proposition~\ref{prop:p4} yields a quantitative near-critical
expected-detour bound in closed form; the unproven planar
universality conjecture that blocks the pure-$\mathbb{Z}^2$
case is not invoked.
\end{proof}

\paragraph{Consequences.}
The small-world extension is independent of the content of
Propositions~\ref{prop:p1}--\ref{prop:p4} --- each of those
propositions' statements remains valid on
$G_{\mathrm{sw}}(k)$ with the grid graph graph replaced by the
augmented graph --- and yields three quantitative gains:
\begin{itemize}
  \item the typical-pair Manhattan-depth bound implied by
    Proposition~\ref{prop:p1} tightens from $\Theta(\sqrt{P})$
    to $O(\log P)$ (for uniformly sampled source--sink pairs);
  \item the sparse-regime monotonicity of
    Proposition~\ref{prop:p2} is preserved (the
    abelian-monoid-fold workload-class is graph-topology
    invariant by Proposition~\ref{prop:p3});
  \item the universality-conjecture blocker on
    Proposition~\ref{prop:p4}'s near-critical regime is
    removed.
\end{itemize}
The trade-off is that $p_c(G_{\mathrm{sw}}(k)) <
p_c^{\mathrm{site}}(\mathbb{Z}^2) \approx 0.593$ under the
failure-as-occupied convention used throughout
\S\ref{sec:prop4} (failed sites are the ``occupied'' sites in
the percolation process, and $p_c$ is the threshold for failure
percolation): adding shortcuts lets failure clusters connect
via long-range links, lowering the failure-percolation
threshold.  Typical application regimes operate with failure
density $\delta$ several orders of magnitude below either
threshold, so the system remains deeply subcritical and the
typical-distance-collapse and universality gains are net
positive.

\section{Synthesis}
\label{sec:synthesis}

The five propositions draw on five distinct mathematical
frameworks, each addressing a different facet of parallel
grid-graph computation, summarised in
Table~\ref{tab:frameworks}.

\begin{table}[!htbp]
\centering
\small
\caption{Mathematical framework invoked by each proposition.}
\label{tab:frameworks}
\begin{tabularx}{\linewidth}{@{}l >{\raggedright\arraybackslash}X >{\raggedright\arraybackslash}X@{}}
\toprule
Prop. & Framework & What it buys \\
\midrule
Proposition~\ref{prop:p1} & Monge--Kantorovich optimal transport
  \cite{villani2009,figalli2014,peyre2019};
  network flows over time \cite{skutella2009};
  Steiner-tree edge bounds on networks
  & Three lower bounds (work $W_1$, depth $r_\mu$, edges
    $\operatorname{St}_G$); negative result on sink-trunk
    route-load variance for corner-sink dimension-order routing. \\
Proposition~\ref{prop:p2} & $\alpha$-$\beta$-$\gamma$
  collective-communication cost model
  \cite{thakur2005,patarasuk2009}; sparse-participation
  cost \cite{shazeer2017,fedus2022switch}; graph-Cheeger
  comparison \cite{dodziuk1984,chung1997}
  & Monotone sparse-regime improvement under the stated
    latency model. \\
Proposition~\ref{prop:p3} & Lawvere theories
  \cite{lawvere1963,maclane1971}; PCC
  \cite{necula1997,bertot2013}; prefix-scan depth
  \cite{blelloch1990}
  & Sufficient algebraic criterion for admissibility, in
    principle PCC-certifiable, with explicit wall-clock bound. \\
Proposition~\ref{prop:p4} & Subcritical site percolation
  \cite{grimmett1999,kesten1980,aizenman1987,akhunzhanov2022};
  XY routing \cite{dally2004}; critical $\mathrm{SLE}_6$
  \cite{smirnov2001,cardy1992,werner2004}
  & Conditional expected additive detour under subcritical
    random failure. \\
Proposition~\ref{prop:p5} & Watts--Strogatz small-world
  \cite{watts1998}; exact percolation on small-world
  \cite{newman2000smallworld,bollobas2001}
  & Diameter collapse (theorem); mean-field universality
    in the near-critical regime (modelling conjecture). \\
\bottomrule
\end{tabularx}
\end{table}

Each proof does strictly less than the corresponding
applied-level claim, in a controlled and honest way:

\begin{itemize}
  \item Proposition~\ref{prop:p1} establishes a Wasserstein
    work bound, a radius depth bound, and a Steiner edge bound
    for compressive reductions, each attained by shortest-path
    or minimum-spanning routing up to an architecture-specific
    affine constant; the accompanying negative result
    (Lemma~\ref{lem:p1-variance}) shows that completion
    times do not concentrate under corner-sink dimension-order
    routing.  The Brenier ``polar factorisation'' claim holds
    only after Euclidean regularisation \cite{peyre2019} and
    does not operationally constrain the discrete schedule.
  \item Proposition~\ref{prop:p2} establishes monotonic
    improvement of the grid-to-cluster ratio in
    $1/f_{\mathrm{act}}$ conditional on the
    $\alpha$-$\beta$-$\gamma$-plus-overhead latency model;
    literal divergence requires a further asymptotic in $N$ or
    the ratio $M_P / (A_N + B_N)$.
  \item Proposition~\ref{prop:p3} establishes the
    abelian-monoid decomposition as a sufficient algebraic
    criterion for a wall-clock bound via parallel prefix; both
    the admissibility functor (via generation-by-one-object)
    and the PCC-style certification hook (via monoid-law proof
    terms) are derived rather than assumed, but a concrete
    certificate format is not specified.
  \item Proposition~\ref{prop:p4} establishes a conditional
    expected (not deterministic) additive-detour bound under
    subcriticality; the sharper near-critical estimates reside
    in the $\mathrm{SLE}_6$ regime of the triangular lattice,
    not the square lattice relevant to the grid graph.
  \item Proposition~\ref{prop:p5} establishes the
    diameter-collapse theorem; the mean-field universality
    extension is recorded as a modelling conjecture,
    rigorous only for the $1$-D-ring base.
\end{itemize}

The five propositions are not independent.  Four cross-Prop
dependencies tighten the foundational picture:

\begin{enumerate}[label=(\alph*),leftmargin=2.0em,itemsep=0.30em]
  \item \textbf{Proposition~\ref{prop:p5}(i) reduces $r_\mu$
  in Proposition~\ref{prop:p1}.}  The typical-distance collapse
  from $\Theta(\sqrt{P})$ to $O(\log P)$ on $G_{\mathrm{sw}}(k)$
  tightens the completion-depth bound~(ii) for source
  distributions concentrated on average node-pairs.  The
  transport-work bound~(i) is unchanged because $W_1$ depends on
  the graph metric, which is altered only along shortcut-using
  paths.

  \item \textbf{Proposition~\ref{prop:p3} is the load-bearing
  assumption for Propositions~\ref{prop:p2} and~\ref{prop:p4}.}
  The abelian-monoid admissibility criterion is what makes the
  cross-architecture latency ratio of Proposition~\ref{prop:p2}
  well-defined (Step~4 of its proof: the two architectures
  compute the same workload by virtue of the algebra of the
  fold, not by any convergence of dynamics), and is equally the
  foundation for the semantics-preservation of deflection
  routing in Proposition~\ref{prop:p4} (the boundary-following
  paths of Step~1 leave the abelian-monoid fold output
  invariant by commutativity).

  \item \textbf{Proposition~\ref{prop:p5} re-opens the regime
  classification of Proposition~\ref{prop:p4}.}  The small-world
  augmentation lowers the failure-percolation threshold,
  $p_c(G_{\mathrm{sw}}(k)) <
  p_c^{\mathrm{site}}(\mathbb{Z}^2)$, because long-range edges
  propagate failure connectivity.  Operating at the same nominal
  failure density $\delta$ that is subcritical on the pure grid
  may not be subcritical on the augmented graph; the regime
  constants of Proposition~\ref{prop:p4} must be re-evaluated
  against the shifted threshold.

  \item \textbf{Proposition~\ref{prop:p1}(ii) supplies the
  geometric constant of Proposition~\ref{prop:p2}.}  The
  grid-side scaling term $M_P = c_w \sqrt{P} \cdot t_{\mathrm{edge}}$
  entering the monotonicity condition $(A_N + B_N) c_1 > M_P
  c_2$ is precisely the diameter-driven lower bound established
  in Proposition~\ref{prop:p1}(ii); the
  Proposition~\ref{prop:p2} threshold is therefore a direct
  quantitative consequence of the Proposition~\ref{prop:p1}
  lower bound, not an independent latency-model parameter.
\end{enumerate}

The synthesis is therefore not five disjoint results stapled
to a common abstraction, but a coupled set of bounds: the
algebraic criterion (P3) underwrites the cross-architecture
comparisons (P2, P4); the geometric depth bound (P1) supplies
the constant that determines when the comparison favours the
grid; and the topology augmentation (P5) reshapes both the
typical-distance term in (P1) and the percolation threshold in
(P4).  The neutral abstraction layer --- finite connected graph,
finite participant set, discrete measure, synchronous cycle ---
is what permits this coupling to be stated
framework-by-framework rather than swept into hardware-specific
definitions.

\section{Methods}
\label{sec:acks}

\paragraph{Pipeline.}
The problem statement (\S\ref{sec:intro}), the selection of the
five propositions, the choice of mathematical framework for
each, and every structural and editorial decision are the
author's.  Seven large-language-model systems (Gemini~3.1,
GPT-5.5, Perplexity Labs, DeepSeek~R1, Qwen~3.6~Max,
GLM-5 / ChatGLM, and Claude Opus~4.7) were used as structured
tooling during drafting: each candidate derivation was
inspected, accepted, rejected, or steered, and errors were
caught and corrected by the author.

\paragraph{Discipline.}
At every step where the first-pass draft cited a framework
without a specific closing theorem, the step was marked with an
inline tag rather than allowed to stand.  Each tag was then
either (a)~closed by an explicit derivation invoking a named
theorem from the listed bibliography, (b)~closed by adding a
missing citation, (c)~resolved by weakening the proposition to
what the cited sources do support, or (d)~recognised as not a
mathematical claim.  No step was removed silently.

\paragraph{Residual-gap result.}
Fourteen original assumption tags; all fourteen resolved within
the propositions' own claims.  Two related questions in the
surrounding mathematical literature --- a sparse-graphon / GNTK
convergence theorem, and a quantitative near-critical bound on
pure $\mathbb{Z}^2$ site percolation --- remain open; neither
enters any proposition of this note.

\paragraph{Authorship policy.}
Per current arXiv and academic-publishing norms, AI systems are
not listed as authors; this section discloses their contribution
in full.  The author selected the propositions, chose the
mathematical frameworks invoked, steered every structural and
editorial decision, caught errors in candidate derivations, and
takes full responsibility for the text.  Large-language-model
tooling was used in the same role as Python for numerical
experiment, Mathematica or SymPy for symbolic manipulation, and
\TeX{} for composition: a tool whose output the author verifies
and for whose use the author is responsible.

\paragraph{Errata and feedback.}
Despite the layered audit process described above, errors
almost certainly remain in the text --- in citation detail, in
algebraic transcription, in the framing of specific steps, or
in scope of the claims themselves.  The author welcomes
corrections and criticism from any reader, human or AI, and
will treat machine-reviewer feedback on the same footing as
human-reviewer feedback: each error report will be evaluated on
its technical content, consistent with the paper's thesis that
structured LLM pipelines belong in the applied scientist's
toolchain alongside human peer review.  Correspondence:
\texttt{research@cybiont.com}.

\section{Conclusion}
\label{sec:conclusion}

Five propositions characterise synchronous peer-to-peer
computation on a grid graph.
Proposition~\ref{prop:p1} gives three lower bounds ---
transport work $W_1$, completion depth $r_\mu$, and
Steiner-tree edges $\operatorname{St}_G$ for compressive reductions ---
each attained by shortest-path or minimum-spanning routing, and
a negative result showing that the corner-sink dimension-order
routing scheme exhibits $\Theta(f_{\mathrm{act}}(1-f_{\mathrm{act}}) P^2)$
per-route variance rather than concentration.
Proposition~\ref{prop:p2} establishes, under the stated
$\alpha$-$\beta$-$\gamma$-plus-overhead latency model, a
monotone sparse-regime improvement of the grid-to-cluster ratio.
Proposition~\ref{prop:p3} gives a sufficient algebraic criterion
for admissibility --- abelian-monoid-fold decomposition --- in
principle certifiable as proof-carrying code.
Proposition~\ref{prop:p4} bounds the conditional expected route
length under subcritical site failure by an additive detour
term.  Proposition~\ref{prop:p5} augments the grid graph with
long-range shortcuts, collapsing the typical shortest-path
length to $O(\log P)$ and placing the fault-tolerance analysis
in the mean-field universality class under a physics-level
universality argument (rigorous for the $1$-D-ring base;
conjectural for the $2$-D-grid base).  Every external theorem
dependency is traced to a published source; two adjacent
questions in the surrounding mathematical literature remain
open but do not enter any proposition of this note.

The contribution is a unified formalization of five
foundation-level constraints on synchronous peer-to-peer
lattice computation: measure-theoretic work and depth bounds,
a conditional sparse-regime scaling lemma, a sufficient
algebraic criterion for admissibility, a subcritical
fault-tolerance detour bound, and a small-world extension.
The value lies in making the assumptions, bounds, and open
regimes explicit in a single reusable model --- the shape an
architectural programme that depends on these foundations
needs before it is designed, licensed, or shipped.  Author
disclosure of the LLM tooling used during preparation is
given in \S\ref{sec:acks}.

\end{document}